\newtheorem{thm}{Theorem}[section]
\newtheorem{lma}[thm]{Lemma}
\newtheorem{cor}[thm]{Corollary}
\newtheorem{conj}{Conjecture}
\newtheorem{prob}{Problem}
\begin{document}

\title{The complexity of flood-filling games on graphs}
\date{\today}
\author{Kitty Meeks and Alexander Scott\\
\small{Mathematical Institute, University of Oxford, 24-29 St Giles', Oxford OX1 3LB, UK} \\
\texttt{\small{\{meeks,scott\}@maths.ox.ac.uk}}}
\maketitle

\begin{abstract}
We consider the complexity of problems related to the combinatorial game Free-Flood-It, in which players aim to make a coloured graph monochromatic with the minimum possible number of flooding operations.  Although computing the minimum number of moves required to flood an arbitrary graph is known to be NP-hard, we demonstrate a polynomial time algorithm to compute the minimum number of moves required to link each pair of vertices.  We apply this result to compute in polynomial time the minimum number of moves required to flood a path, and an additive approximation to this quantity for an arbitrary $k \times n$ board, coloured with a bounded number of colours, for any fixed $k$.  On the other hand, we show that, for $k \geq 3$, determining the minimum number of moves required to flood a $k \times n$ board coloured with at least four colours remains NP-hard.
\end{abstract}

\section{Introduction}

In this paper we consider the complexity of a number of problems related to the one-player combinatorial game Flood-It, first studied by Arthur, Clifford, Jalsenius, Montanaro and Sach in \cite{arthurFUN}.  The original game is played on a board consisting of an $n \times n$ grid of coloured squares, where each square is given a colour from some fixed colour-set.  The player makes a move by changing the colour of the monochromatic path-connected area containing the top left square, and the goal is to make the entire board monochromatic with the minimum possible number of such moves.  We also consider the ``free'' variant of Flood-It in which at each move, as well as choosing a colour, the player can choose freely which area's colour to change.

The game can more generally be played on any graph $G$ equipped with a colouring $\omega$.  Here, in the free version, a move consists of choosing a vertex $v$ and a colour $d$, and giving all vertices in the same monochromatic component as $v$ colour $d$.  Alternatively, we may always play moves at some fixed vertex, as in the original version of the game.  Again, in either case, the aim is to make the entire graph monochromatic using as few moves as possible.

For any board or, more generally, coloured graph, we define the following problems.
\begin{itemize}
\item FIXED-FLOOD-IT is the problem of determining the minimum number of moves required to flood any given coloured graph, if we always play at a specified vertex.  The number of colours may be unbounded.
\item FREE-FLOOD-IT is the same problem when we are allowed to make moves anywhere in the graph.
\item $c$-FIXED-FLOOD-IT and $c$-FREE-FLOOD-IT respectively are the variants of FIXED-FLOOD-IT and FREE-FLOOD-IT in which only colours from some fixed set of size $c$ are used.
\end{itemize}

In \cite{arthurFUN}, Arthur, Clifford, Jalsenius, Montanaro and Sach show that, for any $c \geq 3$, $c$-FIXED-FLOOD-IT and $c$-FREE-FLOOD-IT are both NP-hard on a standard $n \times n$ board.  They further show that, unless $\mathbf{P}=\mathbf{NP}$, there can be no constant-factor (independent of the number of colours $c$) polynomial time approximation algorithm.  

We prove a number of results about the game played on both general graphs and paths, and give a polynomial-time algorithm to compute the minimum number of moves required to connect each pair of vertices in a general graph.  Using some of these results, we then consider the game played on a rectangular $k \times n$ board for various fixed values of $k$.

In particular, we prove the following results.
\begin{itemize}
\item 2-FREE-FLOOD-IT is solvable in polynomial time, answering an open question from an earlier version of \cite{arthurFUN} (posted January 2010).

\item In an arbitrary graph $G = (V,E)$ coloured with any colour-set $C$, the number of moves required to connect the vertices $u$ and $v$ can be computed, for every pair $(u,v) \in V^2$, in time $O(|V|^3|E||C|^2)$.

\item FREE-FLOOD-IT, restricted to $1 \times n$ boards, can be solved in polynomial time.

\item We can compute in polynomial time an additive approximation to $c$-FREE-FLOOD-IT, restricted to $k \times n$ boards, for any fixed integers $k$ and $c$.

\item 4-FIXED-FLOOD-IT and 4-FREE-FLOOD-IT remain NP-hard when restricted to $3 \times n$ boards.
\end{itemize}

Two recent papers (\cite{clifford} and \cite{lagoutte}) both independently show our first result, that 2-FREE-FLOOD-IT is polynomially solvable on general graphs.  In \cite{lagoutte}, Lagoutte also shows that FIXED-FLOOD-IT is polynomially solvable on cycles, whereas for $c \geq 3$, $c$-FIXED-FLOOD-IT and $c$-FREE-FLOOD-IT are NP-hard when restricted to trees.  The hardness of $c$-FIXED-FLOOD-IT on trees was shown independently by Fleischer and Woeginger in their analysis of variants of the related Honey-Bee Game \cite{fleischer10}.

Clifford, Jalsenius, Montanaro and Sach give in \cite{clifford} an $O(n)$ algorithm to solve FIXED-FLOOD-IT on $2 \times n$ boards.  In a companion paper \cite{2xn} we complete the picture for such boards by considering the complexity of ($c$-)FREE-FLOOD-IT.  In particular, we show that for any fixed $c$, $c$-FREE-FLOOD-IT is fixed parameter tractable with parameter $c$; on the other hand, FREE-FLOOD-IT remains NP-hard when restricted to $2 \times n$ boards.

We begin in Section \ref{notation} with some notation and definitions, then in Section \ref{2-free} we consider 2-FREE-FLOOD-IT.  In Section \ref{general} we derive results for general graphs and apply them to the cases of $1 \times n$ and $k \times n$ boards, before showing the complexity results for $3 \times n$ boards in Section \ref{complexity}.

\section{Notation and Definitions}
\label{notation}

Although the original Flood-It game is played on a square grid, we can more generally consider the same game played on any graph $G=(V,E)$, with an initial colouring using colours from the \emph{colour-set} $C$.  Then each move $m=(v,d)$ consists of choosing some vertex $v \in V$ and a colour $d \in C$, and assigning colour $d$ to all vertices in the same monochromatic component as $v$.  The goal is to make every vertex in $G$ the same colour, using as few moves as possible.  We may assume, without loss of generality, that the initial colouring is proper: if not, we simply contract each monochromatic component to a single vertex.

Given any connected graph $G$, equipped with a proper colouring $\omega$, we define $m(G,\omega,d)$ to be the minimum number of moves required to give all its vertices colour $d$, and $m(G,\omega)$ to be $\min_{d \in C}m(G,\omega,d)$.  For any subgraph $H$ of $G$, we write $\omega |_H$ for the colouring $\omega$ restricted to $H$.  Given any sequence of moves $S$ on a graph $G$ with initial colouring $\omega$, we denote by $S(\omega,G)$ (or simply $S(\omega)$ if $G$ is clear from the context) the new colouring obtained by playing $S$ in $G$.  

Let $A$ be any subset of $V$.  We then say a move $m=(v,c)$ is \emph{played in} $A$ if $v \in A$, and that $A$ is \emph{linked} if it is contained in a single monochromatic component.  The \emph{(edge) boundary} of $A$ is defined to be the set of edges $b = \{uv \in E: u \in A, v \notin A\}$, and we say that $A_1, A_2 \subseteq V$ are \emph{adjacent} if their edge boundaries have nonempty intersection.  We call any connected induced subgraph of $G$ an \emph{area}.

When we consider the game played on a rectangular board $B$, we are effectively playing the game in a graph $G_B$ with an initial (proper) colouring $\omega_B$.  This graph is obtained from the planar dual of $B$ (in which there is one vertex corresponding to each square of $B$, and vertices are adjacent if they correspond to squares which are either horizontally or vertically adjacent in $B$) by giving each vertex the colour of the corresponding square in $B$, and contracting every monochromatic component to a single vertex.  We define a \emph{region} of the board $B$ to be a collection of squares corresponding to a single vertex in $G_B$, and thus regions are fixed by the initial colouring.  We shall sometimes use $B$ as a shorthand for $G_B, \omega_B$ (writing, for example, $m(B)$ rather than $m(G_B, \omega_B)$).

\section{2-FREE-FLOOD-IT is solvable in polynomial time}
\label{2-free}

In this section we consider the free version of two-colour Flood-It, played on an arbitrary connected graph $G=(V,E)$.  When making a move $m=(v,d)$ in such a game, our only choice is the vertex at which we play, as there is only one possible way to change its colour.  Making a move in the game then corresponds to picking a vertex $v \in V$ and contracting all edges incident with it, and the aim of the game is to reduce the graph to a single vertex with as few moves as possible.  We can then regard any strategy as a sequence of vertices around which we perform contractions.  Of course, we may contract at a vertex $w$ which was created by an earlier contraction, but in this case we can always choose a vertex $u$ from the original graph as a representative for $w$, and regard the contraction as being performed about $u$.

\begin{lma}
There exists an optimal strategy in which we contract at the same vertex in every move.
\label{single-vertex}
\end{lma}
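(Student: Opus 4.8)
The plan is to recast the lemma as a statement about the radius of $G$ and then prove a matching lower bound. As already noted, with two colours a move is forced to be a contraction: playing at $v$ contracts $v$ together with its whole current neighbourhood, and a strategy floods $G$ precisely when it reduces $G$ to a single vertex. If we contract at the same vertex throughout, then after $k$ moves the blob containing our chosen vertex $v$ is exactly the ball $B(v,k)=\{u:d(u,v)\le k\}$, so such a single-vertex strategy floods $G$ in exactly $\mathrm{ecc}(v):=\max_u d(u,v)$ moves. Hence the best single-vertex strategy uses $\min_v \mathrm{ecc}(v)=\mathrm{rad}(G)$ moves, and the lemma is equivalent to the assertion that $m(G,\omega)=\mathrm{rad}(G)$. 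Since a single-vertex strategy centred at a vertex of minimum eccentricity realises the upper bound $m(G,\omega)\le\mathrm{rad}(G)$ and is itself of the required form, it suffices to prove the matching lower bound $m(G,\omega)\ge\mathrm{rad}(G)$.

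First I would isolate the effect of a single move on distances. Writing $G_1$ for the graph obtained from $G$ by contracting the closed neighbourhood $N[v]$ into one blob $W$, a short case check gives, for surviving vertices $a,b$,
\[
 d_{G_1}(a,b)=\min\bigl\{\,d_G(a,b),\ d_G(a,v)+d_G(b,v)-2\,\bigr\},\qquad d_{G_1}(W,a)=d_G(a,v)-1,
\]
the second alternative corresponding to routing through the blob, since the vertex of $N[v]$ nearest to a survivor $a$ lies at distance $d_G(a,v)-1$. The key lemma I would then establish is that a single move decreases the radius by at most one, i.e.\ $\mathrm{rad}(G_1)\ge\mathrm{rad}(G)-1$ for every choice of $v$. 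Granting this, an induction on the radius gives the lower bound at once: flooding is exactly the process of reducing the radius to $0$, and each move can contribute at most $1$ to this reduction. Combined with the upper bound, this yields $m(G,\omega)=\mathrm{rad}(G)$ and hence the lemma.

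To prove the radius-drop lemma I would show that \emph{every} vertex of $G_1$ still has eccentricity at least $\mathrm{rad}(G)-1$. For the blob this is immediate, as $\mathrm{ecc}_{G_1}(W)=\mathrm{ecc}_G(v)-1\ge\mathrm{rad}(G)-1$. For a surviving vertex $y$ I would take a vertex $w$ with $d_G(y,w)\ge\mathrm{rad}(G)$ and track where it goes: if the $G_1$-distance from $y$ to $w$, or from $y$ to $W$, is still large we are done, and the triangle inequality applied to the distance formula above disposes of all but one tight configuration.

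The main obstacle is precisely this tight case: a surviving vertex $y$ can become a centre of $G_1$ because every vertex that was far from $y$ in $G$ is now reached by a path through the blob that is a full two steps shorter, and here the naive bounds lose a step and give only $\mathrm{rad}(G_1)\ge\mathrm{rad}(G)-2$. I expect to close this gap by arguing that such a configuration forces the contracted vertex $v$ to lie on the geodesics to all of $y$'s far vertices while remaining close to $y$, which pins down a genuine centre of $G$ among the vertices of $N[v]$ and restores the missing step. Verifying that a single such vertex simultaneously controls the distances to the far vertices on both sides of the blob is the delicate point on which the whole argument turns.
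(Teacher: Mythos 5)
Your reduction of the lemma to the statement $m(G,\omega)=\mathrm{rad}(G)$ is sound, and your route is genuinely different from the paper's: the paper takes an optimal sequence $S=v_1^{r_1}\cdots v_k^{r_k}$ with as few distinct contraction vertices as possible and shows the last two blocks can be replaced by $r_{k-1}+r_k$ contractions about one vertex $w$ on a shortest $v_{k-1}$--$v_k$ path, contradicting minimality of $k$; you instead go for the lower bound $m(G,\omega)\ge\mathrm{rad}(G)$ via the claim that one contraction decreases the radius by at most one. The problem is that this radius-drop claim is the entire mathematical content of your approach, and you do not prove it: you explicitly defer the ``tight configuration'', which is exactly the hard case, and as written your bounds only give $\mathrm{rad}(G_1)\ge\mathrm{rad}(G)-2$, hence only $m(G,\omega)\gtrsim\tfrac12\mathrm{rad}(G)$ --- not the lemma. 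Moreover, the way you propose to close the gap is misdirected: you predict the tight case ``pins down a genuine centre of $G$ among the vertices of $N[v]$'', but when $s:=d_G(y,v)\ge 3$ this cannot work, since a vertex $b$ with $d_G(y,b)\le r-2$ lying on the far side of $y$ from $v$ can have $d_G(v,b)=s+d_G(y,b)$, so every vertex of $N[v]$ can still have eccentricity at least $s+r-3\ge r$. The witness of the contradiction must be found next to $y$, not next to $v$.

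The gap is real but closable, and closing it confirms your plan is viable. Write $r=\mathrm{rad}(G)$ and suppose a surviving vertex $y$ had $\mathrm{ecc}_{G_1}(y)\le r-2$. From $d_{G_1}(y,W)=d_G(y,v)-1\le r-2$ we get $s\le r-1$. For any surviving $b$ with $d_G(y,b)\ge r-1$, a $G_1$-path avoiding the blob is a $G$-path, of length at least $r-1$, so the blob route must be the short one: $s+d_G(v,b)-2\le r-2$, i.e.\ $d_G(v,b)\le r-s$; and any absorbed $b$ has $d_G(v,b)\le 1\le r-s$. So \emph{every} vertex at distance at least $r-1$ from $y$ in $G$ lies within distance $r-s$ of $v$. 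Now let $u$ be the neighbour of $y$ on a $y$--$v$ geodesic. If $d_G(y,b)\le r-2$ then $d_G(u,b)\le 1+(r-2)=r-1$; if $d_G(y,b)\ge r-1$ then $d_G(u,b)\le d_G(u,v)+d_G(v,b)\le (s-1)+(r-s)=r-1$. Hence $\mathrm{ecc}_G(u)\le r-1<\mathrm{rad}(G)$, a contradiction. With this paragraph inserted, your argument is complete and is arguably cleaner than the paper's: it delivers $m(G,\omega)=\mathrm{rad}(G)$, i.e.\ Lemma \ref{single-vertex} and the theorem following it, in a single induction. But the proposal as submitted stops short of proving the one step on which, as you say yourself, the whole argument turns.
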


\begin{proof}
Suppose that, for some $v_1, \ldots, v_k \in V$, $S = v_1^{r_1} \ldots v_k^{r_k}$ is an optimal sequence, with $k$ as small as possible (where we perform $r_i$ consecutive contractions about the vertex $v_i$).  We will show that, if $k \geq 2$, there exists a sequence of moves, of no greater length, which contracts the graph to a single vertex and in which contractions are performed about only $k-1$ distinct vertices, contradicting the minimality of $k$ and thus proving the result.

Let us denote by $G'$ the graph obtained by performing the sequence of contractions $v_1^{r_1} \ldots v_{k-2}^{r_{k-2}}$, so the remaining contractions about $v_{k-1}$ and $v_k$ reduce $G'$ to a single vertex.  We claim that there exists a single vertex $w$ such that all vertices in $G'$ are at distance at most $r_{k-1}+r_k$ from $w$, and hence we can perform $r_{k-1} + r_k$ contractions about $w$ to reduce $G'$ to a single vertex, giving our contradiction to the minimality of $k$.

Consider a shortest path $P$ from $v_{k-1}$ to $v_k$ in $G'$.  Without loss of generality we may assume $d(v_{k-1},v_k) > r_{k-1}$, otherwise $v_k$ is absorbed by the contractions performed around $v_{k-1}$ and, in order to minimise the number of distinct vertices, we would have chosen $v_{k-1}$ as a representative for the vertex about which we perform the remaining contractions.  Observe also that the length of $P$ is at most $r_{k-1} + r_k$, or the two final sets of contractions would not reduce $P$ to a single vertex.  

Let $\alpha = r_{k-1} + r_k - d(v_{k-1},v_k) \geq 0$.  We can then consider the last $r_{k-1} + r_k$ moves of $S$ in three stages.  
\begin{enumerate}
\item The first $r_{k-1}$ moves contract all vertices at distance at most $r_{k-1}$ from $v_{k-1}$ in $G'$ to a single vertex, $u_1$, in the new graph $G_1$.
\item The next $r_k - \alpha$ moves contract all vertices at distance at most $r_k - \alpha$ from $v_k$ in $G_1$ to a single vertex $u_2$ in the new graph $G_2$.  Note that $u_1$ is absorbed only at the final step.
\item The remaining $\alpha$ moves absorb only vertices within distance $\alpha$ of $u_2$ in $G_2$.  Thus we absorb any vertices at distance at most $r_k$ from $v_k$ in $G'$, and additionally any other vertices at distance at most $\alpha$ from $u_1$ in $G_1$, that is vertices at distance at most $r_{k-1} + \alpha$ from $v_{k-1}$ in $G'$.
\end{enumerate}
Hence, as these $r_{k-1} + r_k$ moves reduce $G'$ to a single vertex, we know that for every vertex $x \in G'$, either $d(x,v_k) \leq r_k$, or $d(x,v_{k-1}) \leq r_{k-1} + \alpha$.

Now set $w$ to be the vertex on $P$ at distance $r_{k-1}$ from $v_k$.  It remains to check that if $d(x,v_k) \leq r_k$ or $d(x,v_{k-1}) \leq r_{k-1} + \alpha$ then we have $d(x,w) \leq r_{k-1} + r_k$. 

First suppose $d(x,v_k) \leq r_k$.  Then
$$d(x,w) \leq d(x,v_k) + d(v_k,w) \leq r_k + r_{k-1},$$
as required.  Now suppose that $d(x,v_{k-1}) \leq r_{k-1} + \alpha$.  But then we have
\begin{align*}
d(x,w) & \leq d(x,v_{k-1}) + d(v_{k-1},w)  \\
	   & \leq r_{k-1} + \alpha + d(v_{k-1},v_k) - r_{k-1} \\
	   & = r_{k-1} + r_k,
\end{align*}
as required.
\end{proof}

\begin{thm}
2-FREE-FLOOD-IT is solvable in polynomial time on arbitrary graphs.
\end{thm}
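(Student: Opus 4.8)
The plan is to show that, under the reformulation described above, the minimum number of moves needed to flood $G$ is exactly the radius of $G$, and that this quantity can be computed in polynomial time. Throughout I work with the properly $2$-coloured (hence bipartite) graph obtained after the initial contraction of monochromatic components.

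First, recall that a move now consists of contracting all edges incident with a chosen vertex, and that by Lemma \ref{single-vertex} some optimal strategy contracts about a single fixed vertex $w$ in every move. I would track the effect of such a strategy through the distance function: write $B(w,t) = \{x \in V : d(w,x) \leq t\}$ for the ball of radius $t$ about $w$, and let $\mathrm{ecc}(w) = \max_{x \in V} d(w,x)$. The first contraction merges $w$ with precisely the vertices of $B(w,1)$, and the key observation is that contracting a ball $B(w,r)$ to a single vertex leaves every external $x$ at distance exactly $d(w,x) - r$ from the contracted blob. I would verify this by the standard two-sided argument: a shortest $w$--$x$ path meets a vertex at each distance $0, 1, \ldots, d(w,x)$ from $w$, so collapsing $B(w,r)$ shortens it to a $(d(w,x)-r)$-path, while lifting any path of the contracted graph back to $G$ produces a walk from $B(w,r)$ to $x$, ruling out anything shorter. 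Consequently, contracting $t$ times about $w$ merges precisely $B(w,t)$ into a single vertex.

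It follows that $t$ contractions about $w$ reduce $G$ to a single vertex (equivalently, make $G$ monochromatic) if and only if $t \geq \mathrm{ecc}(w)$; so the best single-vertex strategy uses $\mathrm{ecc}(w)$ moves, and the best choice of $w$ uses $\min_{w \in V} \mathrm{ecc}(w) = \mathrm{rad}(G)$ moves. Combining this with Lemma \ref{single-vertex}, which guarantees that a single-vertex strategy is globally optimal, I would conclude that $m(G,\omega) = \mathrm{rad}(G)$. To finish, the radius is computed in polynomial time by running a breadth-first search from each vertex to obtain all pairwise distances (in total time $O(|V||E|)$), reading off each eccentricity as the largest distance found, and taking the minimum; a disconnected input is handled componentwise.

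The only step demanding genuine care is the distance bookkeeping under contraction, namely confirming that a move about $w$ decreases $d(w,\cdot)$ by exactly one for each vertex not yet absorbed, so that repeated contraction sweeps out the balls $B(w,t)$ in order. Everything else is immediate once Lemma \ref{single-vertex} is available, which is why that lemma, rather than the theorem itself, carries the real weight of the argument.
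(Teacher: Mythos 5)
Your proposal is correct and follows essentially the same route as the paper: invoke Lemma \ref{single-vertex}, observe that $t$ contractions about a fixed vertex $w$ absorb exactly the ball of radius $t$ around $w$ so that the optimum equals the radius of the graph, and compute the radius in polynomial time. The ball/distance bookkeeping you spell out is precisely the step the paper dismisses as ``clear,'' so your write-up is simply a more detailed version of the same argument.
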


\begin{proof}
By Lemma \ref{single-vertex}, it is enough to consider strategies in which we contract about the same vertex in every move.  It is clear that the number of moves required, if we always contract around the vertex $v$, is equal to $\max_{u \in V(G)} d(u,v)$, and that the minimum number of moves required to flood the entire graph is obtained by taking the minimum over all possible vertices $v$.  But this is exactly equal to the radius of the graph, which can easily be computed in polynomial time.
\end{proof}

\section{General results for Free-Flood-It}
\label{general}

The main result of this section is a polynomial-time algorithm to determine the minimum number of moves required to link $u$ and $v$, for every pair of vertices $(u,v)$ in an arbitrary connected graph.  We begin by proving two auxiliary results about the special case in which the game is played on a path, and then apply these results to sequences of moves linking pairs of vertices in arbitrary connected graphs.  

We start with a monotonicity result for paths.

\begin{lma}
Let $P$ be a path, with colouring $\omega$ from colour-set $C$, and let $P'$ be a second coloured path with colouring $\omega'$, obtained from $P$ by deleting one vertex and joining its neighbours.  Then, for any $d \in C$, $m(P',\omega',d) \leq m(P,\omega,d)$.  We also have $m(P',\omega') \leq m(P,\omega)$.
\label{monotonicity}
\end{lma}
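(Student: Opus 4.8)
The plan is to prove the per-colour inequality $m(P',\omega',d)\le m(P,\omega,d)$ by \emph{projecting} an optimal flooding sequence for $P$ down onto $P'$, and then to deduce $m(P',\omega')\le m(P,\omega)$ by minimising over $d$. Write $P=v_1\cdots v_n$, and suppose $P'$ is obtained by deleting $v_i$ and joining $v_{i-1}$ to $v_{i+1}$. Let $\pi$ denote the map sending a colouring of $P$ to its restriction to $V(P)\setminus\{v_i\}$, regarded as a colouring of $P'$, so that $\pi(\omega)=\omega'$. Fix an optimal sequence $S=m_1,\dots,m_\ell$ flooding $P$ to colour $d$, where $\ell=m(P,\omega,d)$, and let $\sigma_t$ be the colouring of $P$ after $m_1,\dots,m_t$. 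I would process these moves one at a time, building a sequence $S'$ of moves on $P'$ (with resulting colourings $\tau_t$) that uses \emph{at most one} $P'$-move per $P$-move; this forces $|S'|\le\ell$, and it then remains only to guarantee that $\tau_\ell$ is monochromatic in colour $d$.

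The natural invariant to aim for is $\tau_t=\pi(\sigma_t)$. Each move $m_{t+1}$ recolours a maximal monochromatic interval $M$ of $\sigma_t$ to some colour $c$. If $M$ contains $v_i$, then $M\setminus\{v_i\}$ is still a single maximal monochromatic interval of $\tau_t$, and recolouring it to $c$ is a legal single move realising $\pi(\sigma_{t+1})$. The same holds whenever $M$ lies strictly to one side of the deleted vertex and its image in $P'$ remains a maximal interval. So in all of these cases a single $P'$-move preserves $\tau_t=\pi(\sigma_t)$, exactly as required.

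The one situation that breaks the exact invariant, and the step I expect to be the main obstacle, is when $v_{i-1}$ and $v_{i+1}$ currently carry the same colour $x$: then in $\pi(\sigma_t)$ the regions on the two sides of the cut have already merged into a single interval, whereas $m_{t+1}$ recolours only the part of that interval lying on one side of $v_i$. No single $P'$-move can reproduce $\pi(\sigma_{t+1})$, since the two sides form one region of $P'$ and a move must recolour all of it. To handle this I would relax the target invariant to allow $\tau_t$ to differ from $\pi(\sigma_t)$ by at most one \emph{cross-cut discrepancy}: the region of $\pi(\sigma_t)$ on one side of the cut may be recoloured in $\tau_t$ to match the region on the other side. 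When the problematic move occurs I recolour the whole merged region in one move, deliberately over-colouring the far side and creating exactly such a discrepancy; a case analysis then shows that every subsequent move either leaves the discrepancy untouched or ``bounces'' it to the opposite side of the cut, in each case costing at most one $P'$-move. Verifying that this relaxed invariant really is preserved, move by move, under all the ways the interval straddling the cut can be recoloured or merged is where the bulk of the work lies.

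Finally, once $\sigma_\ell$ is monochromatic in $d$, both sides of the cut have colour $d$, so any permitted cross-cut discrepancy is vacuous and $\tau_\ell=\pi(\sigma_\ell)$ is monochromatic in $d$; hence $m(P',\omega',d)\le\ell=m(P,\omega,d)$. The boundary cases, in which $v_i$ is an endpoint and there is no cut to straddle, reduce to the exact invariant and are immediate. The second statement is then a one-line consequence: choosing $d^\ast$ to attain $m(P,\omega)=\min_{d}m(P,\omega,d)$, we obtain $m(P',\omega')\le m(P',\omega',d^\ast)\le m(P,\omega,d^\ast)=m(P,\omega)$.
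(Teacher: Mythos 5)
Your overall plan (simulate an optimal sequence for $P$ on $P'$, spending at most one $P'$-move per $P$-move, and relaxing exactness where the two neighbours of the deleted vertex have merged) is the right kind of argument, and you have correctly isolated the one genuinely hard case. But your invariant is too weak, and the pivotal claim --- that after an over-colouring, every later move ``either leaves the discrepancy untouched or bounces it to the opposite side'' --- is false, because the over-colouring move can itself create \emph{new} spurious merges beyond the far side. Concretely, take $P=v_1\cdots v_6$ with colours $(1,2,3,2,1,2)$ and delete $v_3$, so $\omega'=(1,2,2,1,2)$. The move $(v_2,1)$ on $P$ recolours only $v_2$; this is your problematic case, since $v_2$ and $v_4$ share colour $2$, and your prescription answers it by recolouring the merged region $\{v_2,v_4\}$ to $1$, giving $\tau_1=(1,1,1,1,2)$. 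The over-colour $1$ happens to agree with the colour of $v_5$, the region \emph{beyond} the far side, so $v_1,v_2,v_4,v_5$ now lie in one monochromatic component of $P'$. If the next move of $S$ is $(v_5,3)$, then $\pi(\sigma_2)=(1,1,2,3,2)$ gives the three regions $\{v_1,v_2\},\{v_4\},\{v_5\}$ three distinct colours. Since flooding moves can never unlink vertices, every colouring reachable from $\tau_1$ (by any number of moves, so banking does not help) assigns these four vertices a single colour; hence no reachable $\tau_2$ agrees with $\pi(\sigma_2)$ outside one discrepant region, and your invariant is not merely unmaintained but unrestorable at this step. On this toy path those two moves do not begin an optimal sequence, but nothing in your case analysis invokes optimality of $S$, so either the analysis must cover such configurations or you owe a separate (and harder) argument that optimal sequences never reach them.

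The repair is to let the discrepancy grow into a set rather than a single region: the invariant to maintain is that $(P',\tau_t)$ equals, up to contracting monochromatic blocks, the coloured path obtained from $(P,\sigma_t)$ by deleting a vertex set $D_t\ni v_i$ and joining neighbours, where $D_t$ may acquire vertices over time (moves of $S$ played inside $D_t$ are answered by doing nothing; other moves are copied, and any truth-regions the copy spuriously absorbs are added to $D_t$). In the example above this handles the second move at cost zero, since $v_5$'s divergent colour is hidden inside $D_2$. This strengthened invariant survives every move of an \emph{arbitrary} sequence, costs at most one $P'$-move per $P$-move, and yields the lemma. It is also, in effect, what the paper's proof does: the paper inducts on $m(P,\omega,d)$, plays the first move $\alpha$ on both paths, and re-applies the lemma to the new pair of colourings. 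Note that in exactly your problematic case the new pair is related by deleting \emph{two} vertices rather than one (the paper's assertion that playing $\alpha$ on $P'$ yields precisely the one-vertex-deleted colouring fails there), so the statement carried through the induction must be the delete-a-set version; once it is, the induction closes. So you found the true difficulty --- which the paper itself passes over silently --- but your single-region cross-cut discrepancy is not a strong enough notion to absorb it.
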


\begin{proof}
Fix $d \in C$, and note we may assume that $\omega$ is a proper colouring of $P$ (contracting monochromatic components if necessary, and observing that the result is trivially true if $v$ has a neighbour of the same colour).  We proceed by induction on $m(P,\omega,d)$.  The result is trivially true for $m(P,\omega,d) = 0$, so assume $m(P,\omega,d) \geq 1$ and that the result holds for any path $Q$ with colouring $\omega_Q$ such that $m(Q,\omega_Q,d) < m(P,\omega,d)$. Let $S$ be an optimal sequence to flood $P$ in colour $d$, and let $\alpha$ be the first move of $S$.  Suppose that $V(P') = V(P) \setminus \{v\}$, and that $E(P')=(E(P)\setminus \{uv: u \in V(P)\}) \cup \{uw: u \neq w \in \Gamma(v)\}$.

First suppose that $\alpha$ is not played at the vertex $v$.  Then we can play $\alpha$ on $P'$, and the path $P'$ with colouring $\alpha(\omega',P')$ is identical to that obtained from $P$ with colouring $\alpha(\omega,P)$ by deleting the vertex $v$ and joining its neighbours.  Moreover, $m(P,\alpha(\omega,P),d) < m(P,\omega,d)$, and so by the inductive hypothesis we have $m(P',\alpha(\omega',P'),d) \leq m(P,\alpha(\omega,P),d)$.  Thus
$$m(P',\omega',d) \leq 1 + m(P',\alpha(\omega',P'),d) \leq 1 + m(P,\alpha(\omega,P),d) = m(P,\omega,d),$$
as required.

Now suppose $\alpha$ is played at $v$.  Then the path obtained from $P$ with colouring $\alpha(\omega,P)$ by deleting the vertex $v$ and joining its neighbours gives the path $P'$ with colouring $\omega'$, since $\omega$ is a proper colouring and so changing the colour of $v$ cannot change the colour of any other vertex. Hence, as $m(P,\alpha(\omega),d) < m(P,\omega,d)$ we have, by the inductive hypothesis,
$$m(P',\omega',d) \leq m(P, \alpha(\omega,P),d) < m(P,\omega,d).$$

Thus in all cases we have $m(P',\omega',d) \leq m(P,\omega,d)$, and as this holds for any colour $d \in C$ it follows immediately that $m(P',\omega') \leq m(P,\omega)$.
\end{proof}

We also need a simple fact about additivity.

\begin{lma}
Let $P_1$ and $P_2$ be paths, with colourings $\omega_1$ and $\omega_2$ from colour-set $C$, let $P=P_1P_2$ be the path obtained by concatenating $P_1$ and $P_2$, and let $\omega$ be the colouring of $P$ which agrees with $\omega_i$ on $P_i$.  Then, for any $d \in C$, $m(P,\omega,d) \leq m(P_1,\omega_1,d) + m(P_2,\omega_2,d)$.
\label{cat-paths}
\end{lma}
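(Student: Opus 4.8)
The plan is to take an optimal sequence $S_1$ flooding $(P_1,\omega_1)$ to $d$, play it verbatim on $P$, and then argue that the colouring it leaves behind can be flooded to $d$ in at most $m(P_2,\omega_2,d)$ further moves; adding the two bounds gives the claim. Write $a$ for the last vertex of $P_1$ and $b$ for the first vertex of $P_2$, so that the single edge $ab$ is the only connection between the two halves of $P$. The whole difficulty is that moves made in $P_1$ may ``spill over'' across $ab$ and disturb $P_2$, so the heart of the proof is to show that this interaction is harmless.

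Two elementary observations control the interaction. First, a \emph{locality} property: since $P$ is a path and $a$ is an endpoint of $P_1$, the monochromatic component in $P$ of any vertex $v\in P_1$ meets $P_1$ in exactly the monochromatic component of $v$ within $P_1$. Consequently, playing $S_1$ on $P$ recolours the vertices of $P_1$ in exactly the same way as playing $S_1$ on $P_1$ alone, and in particular after $S_1$ every vertex of $P_1$ (including $a$) has colour $d$. Secondly, a \emph{no-splitting} property: a flood move recolours a monochromatic component in its entirety, so it can never break a component apart; once two vertices lie in a common monochromatic component they remain together under all subsequent moves. I would prove both by a short induction on the length of the sequence.

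Combining these, I claim the effect of $S_1$ on $P_2$ is simply to recolour some prefix of $P_2$ to $d$. Indeed, a move of $S_1$ (played in $P_1$) can affect a vertex $q\in P_2$ only if the flooded component reaches $q$; as $P$ is a path this forces the component to contain the whole segment from $a$ through $b$ up to $q$, so $q$ joins the component of $a$. By the no-splitting property such a $q$ stays in the component of $a$ thereafter, and hence carries colour $d$ at the end of $S_1$; the vertices of $P_2$ never reached retain their original colours. Thus the ever-affected vertices form a prefix $q_1,\dots,q_J$ now coloured $d$, and the remainder is an untouched suffix $\sigma$ of $\omega_2$.

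It then remains to bound the residual cost. After $S_1$ the path $P$ consists of a monochromatic block of colour $d$ (all of $P_1$ together with the prefix $q_1,\dots,q_J$) followed by $\sigma$; contracting the block, this is the path $[d]\,\sigma$ obtained by prepending a single $d$-coloured vertex to $\sigma$. Prepending a $d$-vertex cannot increase the cost of flooding to $d$ (play an optimal sequence for $\sigma$: by locality it floods $\sigma$ to $d$, and by no-splitting the extra vertex, sharing its neighbour's fate, also ends at colour $d$), so $m([d]\,\sigma,d)\le m(\sigma,d)$; and $\sigma$ is obtained from $(P_2,\omega_2)$ by deleting the prefix $q_1,\dots,q_J$ one vertex at a time, so Lemma~\ref{monotonicity} gives $m(\sigma,d)\le m(P_2,\omega_2,d)$. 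Hence the residual cost is at most $m(P_2,\omega_2,d)$, and $m(P,\omega,d)\le m(P_1,\omega_1,d)+m(P_2,\omega_2,d)$ as required. The main obstacle throughout is the junction edge $ab$, and the no-splitting observation is exactly what shows the resulting perturbation of $P_2$ is benign.
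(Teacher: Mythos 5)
Your proof is correct and takes essentially the same route as the paper's: play an optimal sequence for $P_1$ on the concatenated path, observe that it floods $P_1$ with colour $d$ while absorbing only a prefix of $P_2$, then apply Lemma~\ref{monotonicity} to flood the remaining suffix of $P_2$ in at most $m(P_2,\omega_2,d)$ further moves, noting that the already-flooded block must also end with colour $d$. The only difference is that you spell out the ``locality'' and ``no-splitting'' facts which the paper treats as immediate.
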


\begin{proof}
For $i \in \{1,2\}$, let $S_i$ be an optimal sequence to make $P_i$ monochromatic with colour $d$.  Suppose we begin by playing the sequence $S_1$ on $P$.  This makes $P_1$ monochromatic with colour $d$, and may also absorb some vertices from $P_2$.  But by Lemma \ref{monotonicity}, we can make $P_2'$, the remainder of $P_2$, monochromatic in colour $d$ with a sequence $T_2$ of at most $|S_2|$ moves.  In the course of $T_2$, some vertex on $P_2$ may absorb $P_1$, but as this vertex ends up with colour $d$, the sequence $S_1T_2$ must ultimately give $P_1$ colour $d$.  Hence $m(P,\omega,d) \leq |S_1| + |S_2| = m(P_1,\omega_1,d) + m(P_2,\omega_2,d)$.
\end{proof}

Before moving on to the general case, we need a few further definitions.  Suppose $G = (V,E)$ is a connected graph, with colouring $\omega$ from colour set $C$, and let $u,v \in V$.  Then, for any $d \in C$, we define $m_{G,\omega}(u,v,d)$ to be the minimum number of moves required to link $u$ and $v$ in $G$ with a monochromatic path of colour $d$.  We then set $m_{G,\omega}(u,v) = \min_{d \in C} m_{G,\omega}(u,v,d)$.  When it is clear from the context which graph $G$ and colouring $\omega$ are being considered, we may simply write $m(u,v,d)$ or $m(u,v)$.  

Given two vertices $u,v \in V$, we define $\mathcal{P}_G(u,v)$ to be the set of all $u$-$v$ paths in $G$.  If $S$ is a sequence of moves linking two vertices $u$ and $v$, we say that $P \in \mathcal{P}_G(u,v)$ is \emph{critical} with respect to $S$ if, for all $x,y$ lying on $P$, $S$ does not link $x$ and $y$ in $G$ before they are linked along $P$.

\begin{lma}
Let $G$ be a connected graph, with colouring $\omega$, and suppose that $S$ is a sequence of moves linking the vertices $u_1$ and $w_2$.  Then there exists a critical $u_1$-$w_2$ path with respect to $S$.
\label{general-nice-path}
\end{lma}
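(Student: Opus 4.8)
The plan is to induct on the length $\ell$ of the sequence $S$, peeling off the \emph{last} move. Throughout, for vertices $x,y$ that $S$ eventually links, write $t(x,y)$ for the first time (number of moves of $S$ played) at which $x$ and $y$ lie in a common monochromatic component of $G$. Since a move only ever merges monochromatic components and never splits one, the partition of $V$ into monochromatic components only coarsens as $S$ is played; consequently, for any path $P$ the segment of $P$ between two of its vertices becomes monochromatic exactly at the time equal to the maximum of $t$ over consecutive pairs on that segment. Because a subpath being monochromatic forces its endpoints to be linked in $G$, we always have $t(x,y)$ at most this maximum, so the condition that $P$ be critical is precisely that for every pair $x,y$ on $P$ the value $t(x,y)$ \emph{equals} this maximum; equivalently, that $S$ links the endpoints of each subpath no sooner than the subpath is filled in.

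For the base case $\ell = 0$ the vertices $u_1, w_2$ are already linked in $\omega$, so any initial monochromatic $u_1$-$w_2$ path is vacuously critical. For the inductive step, let $\alpha$ be the last move of $S$ and $S'$ the sequence of its first $\ell-1$ moves. First I would dispose of the easy case in which $S'$ already links $u_1$ and $w_2$: here the inductive hypothesis supplies a path $P$ critical with respect to $S'$, and I would check that $P$ remains critical with respect to $S$. The point is that the whole of such a $P$ is linked into a single monochromatic component already by $S'$ (this follows from criticality applied to the pair $(u_1,w_2)$), so every relevant linking time, and every subpath-filling time, is left unchanged when the final move $\alpha$ is appended.

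The substantial case is when $S'$ does not link $u_1$ and $w_2$ but $\alpha$ does. After $S'$, say $u_1$ lies in the monochromatic component $A$ and $w_2$ in the component $B \neq A$, and $\alpha$ recolours some component $X$, causing it to merge with all of its neighbours sharing the new colour; for $A$ and $B$ both to end up in this merged component, each of $A$ and $B$ must be either equal to $X$ or a neighbour of $X$ of the new colour. I would therefore choose attaching edges $ax_1$ from $A$ to $X$ and $x_2 b$ from $X$ to $B$ (degenerating appropriately when $A=X$ or $B=X$). Applying the inductive hypothesis to $S'$ separately inside $A$, inside $X$, and inside $B$ yields critical paths $P_A$ from $u_1$ to $a$, $P_X$ from $x_1$ to $x_2$, and $P_B$ from $b$ to $w_2$; crucially, each such critical path lies wholly inside the single component in which it is built, since all its vertices are linked by $S'$. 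Concatenating $P_A$, the edge $ax_1$, $P_X$, the edge $x_2 b$, and $P_B$ then produces a genuine simple $u_1$-$w_2$ path $P$, because $A$, $X$, $B$ are disjoint.

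The hard part will be verifying that this $P$ is critical with respect to $S$, and this is where the bookkeeping must be done carefully. The key observations are that every consecutive pair internal to $P_A$, $P_X$, or $P_B$ is already linked by time $\ell-1$, whereas the two bridging edges $ax_1$ and $x_2 b$ are linked only by the final move, at time $\ell$. Hence for a pair $x,y$ lying in a common piece, criticality is inherited verbatim from that piece, since the subpath between them avoids both bridges and its filling time equals $t(x,y)$ by the inductive hypothesis; while for a pair $x,y$ lying in different pieces the subpath between them contains a bridging edge, so its filling time is $\ell$, and this matches $t(x,y)=\ell$ because such $x,y$ sit in distinct components of the time-$(\ell-1)$ partition and are therefore not linked until $\alpha$ is played. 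I expect the only genuine subtlety to be the clean handling of the degenerate configurations ($A=X$, $B=X$, or $x_1=x_2$) together with the observation that each inductively produced path really does stay within its component, which is exactly what guarantees both that $P$ is simple and that the time computations above are valid.
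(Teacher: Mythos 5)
Your proof is correct, and its skeleton matches the paper's: induction on the length of $S$, decomposition at the linking move into the two or three maximal monochromatic areas involved (your $A$, $X$, $B$ are the paper's $U$, $V$, $W$), concatenation of critical sub-paths supplied by the induction hypothesis, and a two-case check (pair inside one piece versus pair straddling a bridge edge) to verify criticality of the concatenation. Two technical differences are worth noting. First, the paper splits $S=S_1 m S_2$ at the \emph{first} move $m$ linking $u_1$ and $w_2$, whereas you peel off the \emph{last} move and absorb everything else into your easy case; these framings are interchangeable. Second, and more substantively, the paper applies the induction hypothesis not to $S_1$ in $G$ but to the subsequences $S_U$, $S_V$, $S_W$ of moves of $S_1$ played inside each area, regarded as separate games on the induced subgraphs $G[U]$, $G[V]$, $G[W]$; this relies on the fact that the game inside a maximal monochromatic area is autonomous (moves played inside it have no effect outside, and vice versa), which is needed both to make the inductive hypothesis applicable and to transfer criticality from $G[U]$ back to $G$. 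You instead apply the induction hypothesis to the shorter sequence $S'$ in the \emph{full} graph $G$ for the pairs $(u_1,a)$, $(x_1,x_2)$, $(b,w_2)$, and then deduce that each resulting critical path lies inside its component by applying criticality to its endpoints: since $S'$ links the endpoints, the whole path must become monochromatic at that instant, hence lies in one component of the time-$(\ell-1)$ partition. This replaces the paper's autonomy observation by a one-line consequence of criticality, and it also simplifies the final bookkeeping, since all linking and filling times for pairs within a single piece are at most $\ell-1$ and so are untouched by appending $\alpha$. Both arguments carry the same mathematical content; yours is a legitimate and slightly more self-contained variant, while the paper's sub-game formulation makes the independence of the three areas explicit.
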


\begin{proof}
We proceed by induction on $|S|$.  The base case for $|S| = 1$ is trivially true, so we assume $|S| > 1$.

Let $m$ be the first move in $S$ that links $u_1$ and $w_2$.  Denote by $S_1$ the initial segment of moves in $S$ occurring before $m$, and by $S_2$ those occurring after $m$, so $S=S_1mS_2$.  Let $U$ be the maximal monochromatic area containing $u_1$ immediately before $m$, and $W$ the maximal monochromatic area containing $w_2$ at this point.  There are two cases: either $U$ and $W$ are adjacent, or there is some third monochromatic area $V$, adjacent to both, such that $m$ changes the colour of $V$ to be the same as that of both $U$ and $W$.

First suppose we are in the second situation, so $m$ changes the colour of a third area, $V$, to link $U$ and $W$.  Let $S_U$, $S_V$ and $S_W$ be the subsequences of $S_1$ consisting of moves played in the areas $U$, $V$ and $W$ respectively.  As $U$, $V$ and $W$ are maximal monochromatic areas, no move from any of the subsequences has any effect on vertices outside the area in which it is played, and so the subsequences are disjoint.  Pick $u_2 \in U$, $w_1 \in W$ and $v_1,v_2 \in V$ such that $u_2$ is adjacent to $v_1$ and $v_2$ is adjacent to $w_1$ (note that there must exist at least one possible choice for each of these vertices, as $U$, $V$ and $V$, $W$ are adjacent).

Clearly if we play $S_U$ in $G[U]$ (with colouring $\omega|_U$) then this links $u_1$ and $u_2$, and similarly $S_V$ links $v_1$ and $v_2$ in $G[V]$ and $S_W$ links $w_1$ and $w_2$ in $G[W]$.  Moreover, as each of these sequences is strictly shorter than $S$, we can apply the inductive hypothesis to obtain a $u_1$-$u_2$ path $P_U$ in $G[U]$ such that no pair of vertices on $P_U$ is linked in $G[U]$ before it is linked along $P_U$.  In the same way we obtain $v_1$-$v_2$ and $w_1$-$w_2$ paths $P_V$, $P_U$ in $G[V]$, $G[W]$ respectively.

\begin{figure} [h]
\centering
\includegraphics[width=0.6\linewidth]{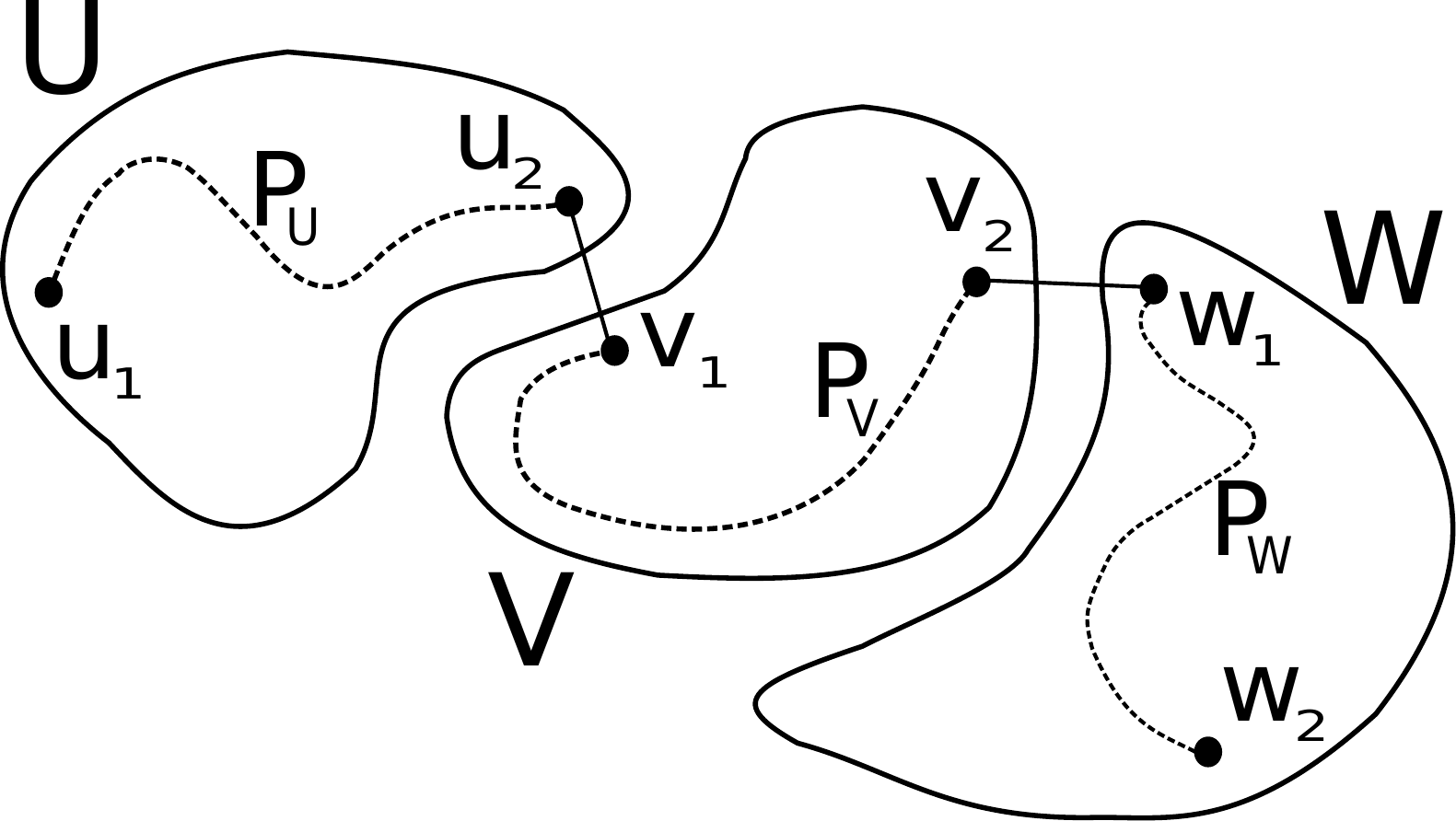}
\caption{The path $P$}
\label{critical-path}
\end{figure}

Now define $P$ to be $u_1P_Uu_2v_1P_Vv_2w_1P_Ww_2$, as illustrated in Figure \ref{critical-path}.  First observe that $P$ is indeed a path: as $U$, $V$ and $W$ are disjoint, no vertex may be repeated.  We claim that $P$ is the path we require.  For, if not, there exist vertices $x$ and $y$ on $P$ and a move $m' \in S$ such that $m'$ links $x$ and $y$ in $G$ before they are linked along $P$.  Clearly $m'$ cannot be from $S_2$, as $P$ is monochromatic before any move in $S_2$ is made, so $x$ and $y$ are already linked along the path by this point.  Nor can we have $m'=m$, since if $x$ and $y$ are not already linked along $P$ before $m$, then $m$ links them along $P$.  So $m' \in S_1$.  But then $x$ and $y$ are linked before $m$, so they must both lie in one of $U$, $V$ and $W$ (as these three areas are not linked immediately before $m$, and moves cannot unlink areas that were previously linked).  Without loss of generality, suppose $x,y \in U$.  But then both $x$ and $y$ lie on $P_U$ so, by definition of $P_U$, they are not linked in $G[U]$ before they are linked along $P_U$, and hence (as $S_U$ has the same effect on $G[U]$ as does the sequence $S_1$ played in $G$) they are not linked in $G$ before they are linked along $P$.  So $P$ is as required.

Now suppose that in fact $U$ and $W$ are adjacent.  We then choose $u_2 \in U$ and $w_1 \in W$ so that $u_2$ and $w_1$ are adjacent (again noting that there must exist such a pair of vertices).  As before, we apply the inductive hypothesis to obtain suitable paths $P_U$ and $P_W$ in $G[U]$ and $G[W]$ respectively, and by exactly the same reasoning the path $P = u_1P_Uu_2w_1P_Ww_2$ is as required.
\end{proof}

We now show that in order to determine $m_{G,\omega}(u,v,d)$, it is enough to consider only $u$-$v$ paths in $G$.

\begin{lma}
Let $G=(V,E)$ be any connected graph with colouring $\omega$ from colour-set $C$, and let $u,v \in V$ and $d \in C$.  Then 
$$m_{G,\omega}(u,v,d) = \min_{P \in \mathcal{P}_G(u,v)} m(P,\omega|_P,d).$$
\label{min-path}
\end{lma}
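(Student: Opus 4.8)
The plan is to prove the identity by establishing the two inequalities separately. Writing $\mu = \min_{P \in \mathcal{P}_G(u,v)} m(P,\omega|_P,d)$, I would show $m_{G,\omega}(u,v,d) \le \mu$ by exhibiting, for each individual path, a linking strategy of the right length, and $m_{G,\omega}(u,v,d) \ge \mu$ by extracting a suitable path from an optimal linking sequence. The first inequality is the routine direction and rests on Lemma~\ref{monotonicity}; the second is the substantive one, and is exactly what Lemma~\ref{general-nice-path} was set up to provide.

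For the upper bound, fix any $P \in \mathcal{P}_G(u,v)$ and let $S$ be an optimal sequence flooding the standalone path $P$ to colour $d$, so $|S| = m(P,\omega|_P,d)$. I would play the moves of $S$ at the corresponding vertices of $G$ and argue that this links $u$ and $v$ in colour $d$. The one subtlety is that an edge of $G$ not on $P$ may cause two regions of $P$ to merge earlier than they would in the standalone path; however, such a merge only contracts $P$ further, and by Lemma~\ref{monotonicity} flooding the resulting shorter path to $d$ costs no more than before. Carrying this observation through an induction on $|S|$ — at each step playing the first move of $S$ in $G$ and absorbing any extra merges as contractions — shows that $S$ drives all of $V(P)$ into a single component, whose colour is $d$ because the completing move of $S$ is played in colour $d$. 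Hence $m_{G,\omega}(u,v,d) \le |S| = m(P,\omega|_P,d)$, and minimising over $P$ gives $m_{G,\omega}(u,v,d) \le \mu$.

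For the lower bound, take an optimal sequence $S$ linking $u$ and $v$ in colour $d$, truncated so that its last move is the first to place $u$ and $v$ in a common monochromatic component $M$ of colour $d$; thus $|S| = m_{G,\omega}(u,v,d)$. By Lemma~\ref{general-nice-path} there is a $u$-$v$ path $P$ that is critical with respect to $S$. I would then restrict $S$ to $P$: each move of $S$ recolours the $G$-component of some vertex, and criticality forces the intersection of this component with $V(P)$ to be a single contiguous subpath (any two of its vertices are linked in $G$, hence linked along $P$, hence joined by an already-absorbed subpath). So each move of $S$ induces at most one legal move on the standalone path $P$, giving a sequence of at most $|S|$ moves. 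Criticality also guarantees that no move acts on $V(P)$ through $G$ before the corresponding vertices are joined along $P$, so the induced sequence reproduces on the standalone path exactly the colouring that $S$ produces on $P$ inside $G$; and since $u,v \in M$ are linked along $P$, all of $V(P)$ lies in $M$ and is therefore coloured $d$ at the end. Thus the induced sequence floods $P$ to $d$ in at most $|S|$ moves, giving $\mu \le m(P,\omega|_P,d) \le |S| = m_{G,\omega}(u,v,d)$.

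The main obstacle is the lower bound, and specifically the bookkeeping in the restriction step: one must verify, by induction along $S$, both that criticality keeps each restricted move contiguous (so the move count does not blow up) and that it rules out purely external merges of $P$-vertices (so that the standalone colouring stays in lock-step with the colouring of $P$ inside $G$). This is precisely the role of the critical path, and getting these two invariants to hold simultaneously is where the real work lies.
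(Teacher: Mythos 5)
Your proposal is correct and follows essentially the same route as the paper: the upper bound by induction using Lemma~\ref{monotonicity} to absorb extra merges (the paper makes your ``contraction'' step precise by rerouting the path through the newly monochromatic region), and the lower bound by taking a critical path from Lemma~\ref{general-nice-path} and restricting the optimal sequence to it, with criticality supplying exactly the contiguity and lock-step invariants you identify. The only caveat is cosmetic: your induction yields \emph{some} sequence of at most $|S|$ moves linking $u$ and $v$ in $G$, rather than showing that $S$ itself does so verbatim, but this weaker conclusion is all the inequality requires.
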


\begin{proof}
Let $S$ be an optimal sequence to link $u$ and $v$ with colour $d$ in $G$, and let $P$ be a critical $u$-$v$ path with respect to $S$.  Let $S'$ be the subsequence of $S$ consisting of moves played in areas intersecting $P$, and without loss of generality assume every move in $S'$ is played on $P$ (otherwise we may replace it with an equivalent move played on $P$).  Then, as $P$ is critical with respect to $S$, the sequence $S'$ played on the path $P$ (considered as a separate graph) has the same effect on $P$ as does $S$ when played in $G$, and so makes $P$ monochromatic.  Thus,
$$m_{G,\omega}(u,v,d) = |S| \geq |S'| \geq m(P,\omega |_P,d) \geq \min_{P \in \mathcal{P}_G(u,v)} m(P,\omega |_P,d).$$

To show the reverse inequality, we prove by induction on $m(P,\omega |_P,d)$ that, for any $P \in \mathcal{P}_G(u,v)$, $m_{G,\omega}(u,v,d) \leq m(P,\omega |_P,d)$.  The base case, for $m(P,\omega|_P,d) = 0$, is trivially true, so let $P_1 \in \mathcal{P}_{uv}$ and suppose $S_1$ is a nonempty optimal sequence to flood $P_1$ with colour $d$.  Consider the first move, $\alpha$, of $S_1$.

\begin{figure} [h]
\centering
\includegraphics[width=0.5\linewidth]{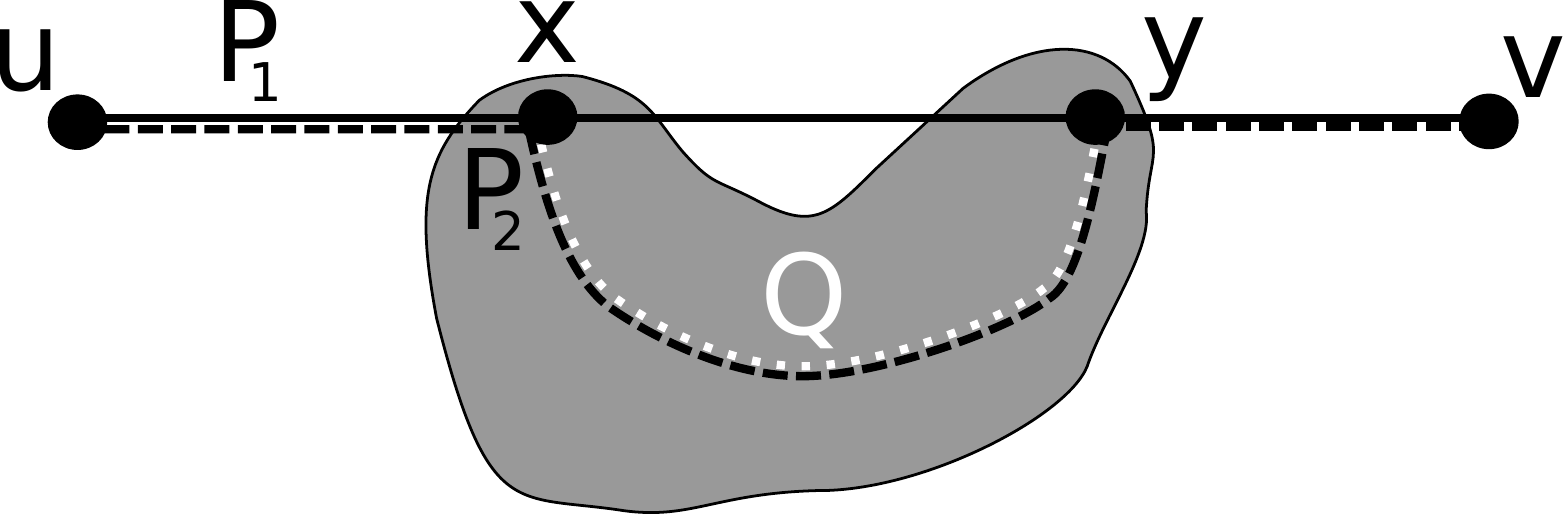}
\caption{The $u$-$v$ paths $P_1$ and $P_2$. The shaded area is a monochromatic component of $G$ with colouring $\alpha(\omega,G)$.}
\label{P_2}
\end{figure}

First suppose that there exist two or more vertices on $P_1$ whose colours are changed by $\alpha$ when the move is played in $G$, but are not linked along the path.  Suppose $x$ is the first such vertex on $P_1$ when the path is traversed from $u$ to $v$, and $y$ the last, and observe that there exists a monochromatic $x$-$y$ path $Q$ in $G$ with colouring $\alpha(\omega,G)$.  Let $P_2$ be the $u$-$v$ path in $G$ obtained by joining the segments of $P_1$ from $u$ to $x$ and from $y$ to $v$ with the path $Q$, as illustrated in Figure \ref{P_2}.  Then the path $P_2$ with colouring $\alpha(\omega,G)|_{P_2}$ can (after contracting monochromatic components) be obtained from $P_1$ with colouring $\alpha(\omega |_{P_1}, P_1)$ by deleting some consecutive vertices and joining the resulting segments so, by Lemma \ref{monotonicity}, $m(P_2,\alpha(\omega,G)|_{P_2},d) \leq m(P_1, \alpha(\omega|_{P_1},P_1),d) = m(P_1,\omega|_{P_1},d) -1$.  Hence, by the inductive hypothesis, we have
$$m_{G,\alpha(\omega,G)}(u,v,d) \leq m(P_2,\alpha(\omega,G)|_{P_2},d),$$
and so
\begin{align*}
m_{G,\omega}(u,v,d) & \leq 1 + m_{G,\alpha(\omega,G)}(u,v,d) \\
				    & \leq 1 + m(P_2,\alpha(\omega,G)|_{P_2},d) \\
				    & \leq 1 + m(P_1, \alpha(\omega|_{P_1},P_1),d) \\
				    & = m(P_1,\omega|_{P_1},d).
\end{align*}				    

Now suppose that $\alpha$ does not change the colour of any such pair of vertices on $P_1$.  Then $\alpha(\omega,G)|_{P_1} = \alpha(\omega|_{P_1},P_1)$ and so $m(P_1, \alpha(\omega,G)|_{P_1},d) = m(P_1,\omega|_{P_1},d) - 1$.  Applying the inductive hypothesis in this case then gives
\begin{align*}
m_{G,\omega}(u,v,d) & \leq 1 + m_{G,\alpha(\omega,G)}(u,v,d) \\
					& \leq 1 + m(P_1, \alpha(\omega,G)|_{P_1},d) \\
					& = m(P_1,\omega|_{P_1},d).
\end{align*}

Thus we have
$$m_{G,\omega}(u,v,d) = \min_{P \in \mathcal{P}_G(u,v)} m(P,\omega|_P,d),$$
as required.
\end{proof}

Furthermore, we now see that if $P$ is a critical $u$-$v$ path with respect to $S$, an optimal sequence to link $u$ and $v$, then all moves in $S$ are played on $P$.

\begin{lma}
Let $G$ be any connected graph with colouring $\omega$, $S$ an optimal sequence of moves to link $u,v \in V$ in $G$, and $P$ a critical $u$-$v$ path with respect to $S$.  Then all moves of $S$ are played in areas intersecting $P$.
\label{on-good-path}
\end{lma}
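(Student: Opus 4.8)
The plan is to recognise that this lemma falls out of the proof of Lemma~\ref{min-path} almost for free: the chain of inequalities established there becomes a chain of \emph{equalities} precisely when $S$ is optimal, and the resulting equality $|S| = |S'|$ is exactly the statement to be proved. So rather than construct anything new, I would re-run that chain under the extra hypothesis of optimality and read off the consequence.

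First I would fix the colour in which to work. Since $S$ links $u$ and $v$, after playing $S$ these two vertices lie in a common monochromatic component of some colour $d$, whence $|S| \geq m_{G,\omega}(u,v,d)$. Combining this with $m_{G,\omega}(u,v,d) \geq m_{G,\omega}(u,v) = |S|$ (the equality by optimality of $S$) forces $|S| = m_{G,\omega}(u,v,d)$; that is, $S$ is in fact an optimal sequence for linking $u$ and $v$ specifically in colour $d$. This small step is needed because Lemma~\ref{min-path} is phrased per colour, whereas $S$ is only assumed optimal for $m_{G,\omega}(u,v)$.

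Next I would reintroduce the subsequence $S'$ from the proof of Lemma~\ref{min-path}, namely the subsequence of $S$ consisting of those moves played in areas intersecting $P$ (again replacing each such move by an equivalent move played on $P$). Since $P$ is critical with respect to $S$, playing $S'$ on $P$ alone reproduces the effect of $S$ on $P$ and hence floods $P$ in colour $d$, so $|S'| \geq m(P,\omega|_P,d)$. Together with the trivial bound $|S| \geq |S'|$ and Lemma~\ref{min-path} this yields
$$m_{G,\omega}(u,v,d) = |S| \geq |S'| \geq m(P,\omega|_P,d) \geq \min_{Q \in \mathcal{P}_G(u,v)} m(Q,\omega|_Q,d) = m_{G,\omega}(u,v,d),$$
so every inequality in the chain is an equality; in particular $|S| = |S'|$. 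As $S'$ is by construction a subsequence of $S$ retaining only the moves played in areas intersecting $P$, the equality $|S| = |S'|$ forces $S' = S$, i.e.\ every move of $S$ is played in an area intersecting $P$, as required. I do not expect a genuine obstacle here: the entire content is the observation that optimality collapses the inequalities of the previous proof, and the only point demanding a little care is the colour-fixing in the second step.
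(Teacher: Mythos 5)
Your proof is correct and is essentially the paper's own argument: both close the same chain of inequalities, $|S| \geq |S'| \geq m(P,\omega|_P,\cdot) \geq m_{G,\omega}(u,v) = |S|$, using the criticality of $P$ (so that $S'$ floods $P$) together with Lemma~\ref{min-path}, and then read off $|S'| = |S|$. The only difference is cosmetic: you fix the terminal colour $d$ and argue per-colour (necessitating your preliminary step $|S| = m_{G,\omega}(u,v,d)$), whereas the paper works directly with the colour-minimized quantities $m(P,\omega)$ and $m_{G,\omega}(u,v)$, which renders that step unnecessary.
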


\begin{proof}
First note that, by Lemma \ref{min-path}, $|S| \leq m(P,\omega)$.  Let $S'$ be the subsequence of $S$ consisting of moves played in areas intersecting $P$, and without loss of generality assume that all moves in $S'$ are in fact played on $P$.  Then, as $P$ is critical, $S'$ played on the path $P$ (considered as a separate graph) makes $P$ monochromatic, and so $|S'| \geq m(P,\omega)$.  But then
$$|S| \leq m(P,\omega) \leq |S'| \leq |S|,$$
so we must have equality throughout.  In particular, $|S'|=|S|$ and hence all moves of $S$ are played in areas intersecting $P$.
\end{proof}

Next we show that it cannot be harder to connect a pair of vertices in a larger graph.

\begin{cor}
Suppose $G$ is any connected graph, with colouring $\omega$ from colour-set $C$, and let $H$ be a connected subgraph of $G$, $d \in C$ and $u,v \in V(H)$.  Then 
$$m_{H,\omega}(u,v,d) \geq m_{G,\omega}(u,v,d).$$
\label{general-subgraph}
\end{cor}

\begin{proof}
As $H$ is a subgraph of $G$, it is clear that $\mathcal{P}_H(u,v) \subseteq \mathcal{P}_G(u,v)$.  Thus, by Lemma \ref{min-path}, we have
$$m_{H,\omega}(u,v,d) = \min_{P \in \mathcal{P}_H(u,v)} m_{P,\omega}(P,\omega,d) \geq \min_{P \in \mathcal{P}_G(u,v)} m_{P,\omega}(P,\omega,d) = m_{G,\omega}(u,v,d).$$
\end{proof}

Our final auxiliary result before the main theorem of this section concerns the additivity of connection times.

\begin{cor}
Let $G=(V,E)$ be any connected graph with colouring $\omega$ from colour-set $C$, and let $u,v \in V$, $xy \in E$ and $d \in C$.  Then
$$m(u,v,d) \leq m(u,x,d) + m(y,v,d).$$
\label{triangle}
\end{cor}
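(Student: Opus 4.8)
The plan is to exhibit a single $u$-$v$ path whose cost to flood in colour $d$ is at most $m(u,x,d)+m(y,v,d)$, and then to appeal to Lemma \ref{min-path}, which expresses $m(u,v,d)$ as a minimum over $u$-$v$ paths. This turns an additivity statement about connection times into a statement about concatenating paths, which is exactly what Lemma \ref{cat-paths} supplies.

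First I would invoke Lemma \ref{min-path} to pick witnessing paths: there is a path $P_1 \in \mathcal{P}_G(u,x)$ with $m(P_1,\omega|_{P_1},d) = m(u,x,d)$ and a path $P_2 \in \mathcal{P}_G(y,v)$ with $m(P_2,\omega|_{P_2},d) = m(y,v,d)$. Since $xy \in E$, the edge $xy$ joins the endpoint $x$ of $P_1$ to the endpoint $y$ of $P_2$, so the concatenation $\tilde P = P_1 P_2$ across $xy$ is a (possibly non-simple) $u$-$v$ walk in $G$; regarded as an abstract coloured path, Lemma \ref{cat-paths} gives $m(\tilde P,\tilde\omega,d) \le m(P_1,\omega|_{P_1},d) + m(P_2,\omega|_{P_2},d) = m(u,x,d)+m(y,v,d)$, where $\tilde\omega$ is the combined colouring. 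If $P_1$ and $P_2$ happen to be vertex-disjoint then $\tilde P$ is already a genuine member of $\mathcal{P}_G(u,v)$, and Lemma \ref{min-path} finishes the argument immediately.

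The main obstacle is that $P_1$ and $P_2$ may share vertices, so that $\tilde P$ need not be simple. I would deal with this by shortcutting: whenever the walk revisits a vertex of $G$, delete from $\tilde P$ the copies strictly between the two occurrences and rejoin. Each such deletion removes a single vertex and joins its neighbours along the path, and the two vertices left adjacent at the end of a shortcut still correspond to an edge of $G$ (the revisited vertex being a genuine vertex of $G$). By Lemma \ref{monotonicity}, each deletion can only decrease the cost of flooding in colour $d$, so the simple $u$-$v$ path $P$ that remains satisfies $m(P,\omega|_P,d) \le m(\tilde P,\tilde\omega,d) \le m(u,x,d)+m(y,v,d)$. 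The only point needing care is that the colouring surviving on $P$ is exactly $\omega|_P$, which holds because every remaining vertex is a vertex of $G$ carrying its original colour. Applying Lemma \ref{min-path} to $P \in \mathcal{P}_G(u,v)$ then gives $m(u,v,d) \le m(P,\omega|_P,d) \le m(u,x,d)+m(y,v,d)$, as required.
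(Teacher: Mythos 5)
Your proposal is correct and follows essentially the same route as the paper: invoke Lemma \ref{min-path} to obtain witnessing paths for $m(u,x,d)$ and $m(y,v,d)$, concatenate them across the edge $xy$, bound the cost of the concatenation by Lemma \ref{cat-paths}, reduce the resulting walk to a simple $u$-$v$ path via repeated applications of Lemma \ref{monotonicity}, and apply Lemma \ref{min-path} once more to conclude. Your treatment of the shortcutting step is in fact slightly more careful than the paper's, which simply asserts that a path can be extracted from the walk by deleting vertices.
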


\begin{proof}
By Lemma \ref{min-path}, there exist $u$-$x$ and $v$-$y$ paths $P_{ux}$ and $P_{yv}$ in $G$ such that $m_G(u,x,d) = m(P_{ux},\omega|_{P_{ux}},d)$ and $m(y,v,d) = m(P_{yv},\omega|_{P_{yv}},d)$.  Then $P_{ux}P_{yv}$ gives a $u$-$v$ walk in $G$, and so we can obtain a $u$-$v$ path $P_{uv}$ in $G$ by deleting some vertices from $P_{ux}P_{yv}$ (and joining their neighbours).  Then
\begin{align*}
m(u,v,d) & \leq m(P_{uv},\omega|_{P_{uv}},d) &\mbox{by Lemma \ref{min-path}} \\
	     & \leq m(P_{ux}P_{yv},\omega|_{P_{ux}P{yv}},d) &\mbox{by Lemma \ref{monotonicity} } \\
	     & \leq m(P_{ux},\omega|_{P_{ux}}, d) + m(P_{yv},\omega|_{P_{yv}}, d) &\mbox{by Lemma \ref{cat-paths} } \\
	     & = m(u,x,d) + m(y,v,d) &\mbox{by choice of $P_{ux}$,$P_{yv}$.}
\end{align*}
\end{proof}

Using these results, we now consider how to calculate the minimum number of moves required to link all pairs of vertices in an arbitrary graph.  This problem is similar to the \emph{all-pairs shortest path} problem, which can be solved in time $O(|V|^3)$ using the Floyd-Warshall algorithm, as described in \cite{clr90}.  Here, however, the situation is somewhat more complex: firstly, we need to consider the different costs associated with linking pairs of vertices in different colours, and secondly we cannot simply add costs when we concatenate paths.  These factors lead to the greater complexity of our algorithm.

\begin{thm}
Let $G=(V,E)$ be a connected graph with colouring $\omega$ from colour-set $C$.  Then we can compute $m(u,v)$ for every pair $(u,v) \in V^{(2)}$ in time $O(|V|^3|E||C|^2)$.
\label{connection-time}
\end{thm}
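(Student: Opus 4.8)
The plan is to set up a Floyd--Warshall/Bellman--Ford-style dynamic program over the $O(|V|^2|C|)$ quantities $m(u,v,d)$, taking Corollary \ref{triangle} as the relaxation rule, and then to read off $m(u,v)=\min_{d\in C}m(u,v,d)$ by a final sweep. The two combination rules I would iterate are the \emph{edge relaxation} $m(u,v,d)\le m(u,x,d)+m(y,v,d)$ for every edge $xy\in E$ (this is exactly Corollary \ref{triangle}), and the \emph{recolouring relaxation} $m(u,v,d)\le 1+m(u,v,e)$ for every $e\in C$ (one further move recolours the whole monochromatic $e$-component linking $u$ and $v$). The base cases are $m(u,u,d)=0$ if $\omega(u)=d$ and $1$ otherwise, with every off-diagonal entry initialised to $+\infty$. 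That the computed value $\hat m$ never underestimates the truth is immediate: each rule is a valid inequality and the base cases are exact, so an induction on the number of relaxations applied gives $\hat m(u,v,d)\ge m(u,v,d)$ throughout. The content of the theorem is the reverse inequality, namely that the iteration actually \emph{reaches} $m(u,v,d)$.

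For this completeness direction I would induct on $m(u,v,d)$. Fix an optimal sequence $S$ linking $u$ and $v$ in colour $d$, and, by Lemma \ref{general-nice-path}, a critical $u$--$v$ path $P$; by Lemmas \ref{min-path} and \ref{on-good-path} every move of $S$ lies on $P$ and $m(u,v,d)=m(P,\omega|_P,d)=|S|$. I would then show that $P$ admits a decomposition matching one of the two relaxations without increasing the move count: either there is an edge $e^*$ of $P$ whose two subpaths $P_1,P_2$ satisfy $m(P,\omega|_P,d)=m(P_1,\omega|_{P_1},d)+m(P_2,\omega|_{P_2},d)$, so that the edge relaxation through $e^*$ is tight; or the final move of $S$ recolours a single non-$d$ block to $d$, which is matched by the recolouring relaxation. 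In either case the two pieces are cheaper to link, so the inductive hypothesis yields the claimed value for $\hat m$, with Lemma \ref{monotonicity} and Lemma \ref{cat-paths} supplying the monotonicity and additivity needed to control the pieces after the split.

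I expect the existence of a tight split to be the main obstacle. Subadditivity (Lemma \ref{cat-paths}) only gives $m(P,\omega|_P,d)\le m(P_1,\omega|_{P_1},d)+m(P_2,\omega|_{P_2},d)$ for \emph{every} split, so one must exhibit a single edge at which equality holds. I would extract this edge from the structure of an optimal sequence, most promisingly from its last move $\beta$: before $\beta$ the path $P$ is monochromatic in $d$ except for one block $X$ (of some colour $e\neq d$), with $d$-regions on either side of $X$, and $\beta$ recolours $X$ to $d$. Partitioning the earlier moves of $S$ among the left $d$-region, the block $X$, and the right $d$-region should localise the cost and produce either the tight split-edge or the recolouring step, in the spirit of the first-move case analysis used to prove Lemma \ref{min-path}. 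Verifying that this partition really is cost-preserving, rather than merely cost-nonincreasing, is the delicate point.

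Finally, for the running time, the values $m(u,v,d)$ are nonnegative integers bounded by $|V|-1$ (flood a shortest $u$--$v$ path vertex by vertex), so the monotone iteration reaches its fixpoint after $O(|V|)$ rounds. Each round recomputes all $O(|V|^2|C|)$ entries, and each entry costs $O(|E|)$ for the edge relaxation and $O(|C|)$ for the recolouring relaxation; since $G$ is connected, $|C|\le|V|\le|E|+1$, so a round costs $O(|V|^2|C||E|)$ and the whole computation runs within the stated bound $O(|V|^3|E||C|^2)$. The concluding minimisation over $d$ to obtain each $m(u,v)$ adds only $O(|V|^2|C|)$ and is negligible.
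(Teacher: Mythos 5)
Your algorithm and both proof directions follow the paper's own route: the paper's recursion (\ref{gct-recursion}) is precisely your edge relaxation fused with your recolouring relaxation, soundness is argued identically from Corollary \ref{triangle} and inequality (\ref{change-colour}), and completeness is argued by the same case analysis on the last move of an optimal sequence flooding the path supplied by Lemma \ref{min-path}. The genuine gap is that your completeness induction is not well-founded. You induct on the value $m(u,v,d)$ and assert that ``in either case the two pieces are cheaper to link,'' but a tight edge split can have a piece of cost $0$, in which case the other piece has cost exactly $m(u,v,d)$. Concretely, take the path $u$--$x$--$v$ with $\omega(u)=\omega(v)=d$ and $\omega(x)=a\neq d$: here $m(u,v,d)=1$, the recolouring rule is of no use (before the unique move, $u$ and $v$ are not linked in any colour), and both edge splits produce pieces of costs $0$ and $1$, where the cost-$1$ piece (say $(x,v,d)$) is not a base case and has the same value as the original problem, so your inductive hypothesis never applies to it. The paper escapes this by inducting not on the value but on $l(u,v,d)$, the minimum length of a $u$--$v$ path $P$ attaining $m(u,v,d)=m_P(P,\omega|_P,d)$: in the last-move analysis the subproblems are realised on proper subpaths of such a $P$, and because the chain of inequalities collapses to equality these subpaths are themselves optimal for the subproblems, so their $l$-values strictly decrease. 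Incidentally, the step you flag as delicate---producing a split that is cost-\emph{preserving} rather than merely cost-nonincreasing---costs nothing: once you prove the $\geq$ direction for the split edge extracted from the last move, soundness supplies $\leq$, so equality is automatic. The real delicacy is the well-foundedness issue above, which your write-up passes over.

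The same missing device breaks your running-time argument. From ``the entries are integers bounded by $|V|$'' you conclude that the synchronous iteration stabilises after $O(|V|)$ rounds, but this is a non sequitur: monotonicity plus boundedness guarantees termination, yet a priori only one entry need strictly decrease per round, so the number of rounds is bounded only by the total number of strict decreases over all $|V|^2|C|$ entries---polynomially many, but far more than $|V|$, which would overshoot the claimed bound $O(|V|^3|E||C|^2)$. The conclusion that linearly many rounds suffice is true, but its proof is again the paper's $l$-induction: after $k$ rounds every triple with $l(u,v,d)\leq k$ is computed correctly (with your two separated rules one needs roughly $2k$ rounds, since simulating the paper's combined rule takes an edge-relaxation round followed by a recolouring round), and $l(u,v,d)\leq |V|$ always. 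So both gaps are repaired by the single ingredient your proposal lacks: the auxiliary path-length measure $l(u,v,d)$ and induction on it, rather than on the value $m(u,v,d)$.
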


\begin{proof}
We begin by observing that, for any $v \in V$ and $d \in C$, we have
\[
m(v,v,d) = \begin{cases}
		0 & \text{if $v$ has colour $d$}, \\
		1 & \text{otherwise}
	      \end{cases}
\]
We then claim that if we define, for all $u,v \in V$ and $d \in C$,
\begin{align}
m^*(u,v,d) = \min_{xx' \in E} \{& m(u,x,d) + m(x',v,d), \nonumber \\
			          & \min_{d' \in C} \{1 + m(u,x,d') + m(x',v,d')\}\},
\label{gct-recursion}			      
\end{align}
then $m(u,v,d) = m^*(u,v,d)$.

First, let us show that $m^*(u,v,d)$ gives an upper bound on $m(u,v,d)$.  By Corollary \ref{triangle}, $m(u,v,d) \leq m(u,x,d) + m(x',v,d)$ for any edge $xx' \in E$.  Note that
\begin{equation}
m(u,v,d) \leq 1 + m(u,v,d'),
\label{change-colour}
\end{equation}
since with one additional move we can change the colour of the monochromatic area containing $u$ and $v$ to $d$.  So Corollary \ref{triangle} further shows that, for any $xx' \in E$ and $d' \in C$,
$$m(u,v,d) \leq 1 + m(u,v,d') \leq 1 + m(u,x,d') + m(x',v,d').$$
So, taking the minimum over all such possibilities, we have $m(u,v,d) \leq m^*(u,v,d)$, as required.

We now proceed to show the reverse inequality.  By Lemma \ref{min-path}, there exists some $u$-$v$ path $P$ in $G$ so that $m(u,v,d) = m_{P,\omega|_P}(P,\omega|_P,d)$.  Suppose $S$ is an optimal sequence to make the isolated path $P$ (with colouring $\omega|_P$) monochromatic with colour $d$ (so that $m(u,v,d) = |S|$), and consider the three possibilities for the last move of $S$.
\begin{enumerate}

\item The last move links two monochromatic segments of $P$.  Without loss of generality, suppose $xx' \in E(P)$ is such that the segment from $u$ to $x$ has colour $d$ and the segment from $x'$ to $v$ has colour $d'$.  Then $xx' \in E(G)$ and we have
\begin{align*}
|S| & \geq 1 + m_{P,\omega|_P}(u,x,d) + m_{P,\omega|_P}(x',v,d') \\
	& \geq m_{P,\omega|_P}(u,x,d) + m_{P,\omega|_P}(x',v,d) & \mbox{by (\ref{change-colour})}\\
	& \geq m_{G,\omega}(u,x,d) + m_{G,\omega}(x',v,d) & \mbox{by Corollary \ref{general-subgraph}} 
\end{align*}
so certainly
$$m(u,v,d) = |S| \geq \min_{xx' \in E} \{m(u,x,d) + m(x',v,d)\}.$$

\item The last move links three monochromatic segments of $P$, of which the end two must already have colour $d$.  Suppose $xx', yy' \in E(P)$ are such that the segment from $u$ to $x$ has colour $d$, that from $x'$ to $y'$ has colour $d'$, and the final segment from $y$ to $v$ has colour $d$.  Then $xx', yy' \in E(G)$ and 
\begin{align*}
|S| & \geq 1 + m_{P,\omega|_P}(u,x,d) + m_{P,\omega|_P}(x',y',d') + m_{P,\omega|_P}(y,v,d) \\
      & \geq m_{P,\omega|_P}(u,x,d) + m_{P,\omega|_P}(x',y',d) + m_{P,\omega|_P}(y,v,d) & \mbox{by (\ref{change-colour})}\\
      & \geq m_{P,\omega|_P}(u,x,d) + m_{P,\omega|_P}(x',v,d) & \mbox{by Corollary \ref{triangle}} \\
      & \geq m_{G,\omega}(u,x,d) + m_{G,\omega}(x',v,d) & \mbox{by Corollary \ref{general-subgraph}}
\end{align*}
so again,
$$m(u,v,d) = |S| \geq \min_{xx' \in E} \{m(u,x,d) + m(x',v,d)\}.$$

\item $P$ is already monochromatic, and the final move changes its colour to $d$.  In this case, $|S| \geq 1 + m_{P,\omega|_P}(u,v,d')$, for some $d' \in C$.  Note that in an optimal sequence to flood $P$ with colour $d$, $P$ cannot be monochromatic before the penultimate move (otherwise we could obtain a shorter sequence by changing the colour to $d$ immediately).  So in this case the second last move must have linked either two or three monochromatic segments of $P$, and so by the two cases above we have
$$m_{P,\omega}(u,v,d') \geq m_{G,\omega}(u,v,d') \geq \min_{xx' \in E} \{m_{G,\omega}(u,x,d') + m_{G,\omega}(x',v,d')\}.$$
Thus
$$|S| \geq 1 + m_{P,\omega}(u,v,d') \geq 1 + \min_{xx' \in E} \{m_{G,\omega}(u,x,d') + m_{G,\omega}(x',v,d')\},$$
and certainly
$$m(u,v,d) = |S| \geq 1 + \min_{\substack{xx' \in E \\ d' \in C}} \{m(u,x,d') + m(x',v,d')\}.$$
\end{enumerate}
So in all cases we have $m(u,v,d) \geq m^*(u,v,d)$, implying that we do indeed have $m^*(u,v,d) = m(u,v,d)$ for all $u,v \in V$ and $d \in C$, as required.  

In our dynamic program, we initialise values of $m(v,v,d)$ as described above, and set all other values to $\infty$.  For any $u,v \in V$ and $d \in C$, let us define $l(u,v,d)$ to be the minimum length of a $u-v$ path $P$ such that $m(u,v,d) = m_P(P,\omega|_P,d)$, and note that initially $m(u,v,d)$ is calculated correctly if $l(u,v,d)=0$.  Further note that, by the reasoning above, we calculate $m(u,v,d)$ correctly if we consider only triples $(x,y,d')$ in $m^*(u,v,d)$ for which $l(x,y,d') < l(u,v,d)$.  Thus we see inductively that, after the $k^{th}$ iteration, $m(u,v,d)$ is calculated correctly whenever $l(u,v,d) \leq k$.  But for any $u,v \in V$ and $d \in C$, we must have $l(u,v,d) \leq |V|$, so certainly $|V|$ iterations will suffice.  

At each iteration we compute $|V|^2 \cdot |C|$ values of $m(u,v,d)$, and for each one we need to consider $|E|$ possible edges and $|C|$ possible colours, so each iteration takes time $O(|V|^2|E||C|^2)$.  Thus, as we need a total of $|V|$ iterations, the entire computation takes time $O(|V|^3|E||C|^2)$.
\end{proof}

We obtain an easy corollary by applying this result to the special case in which the graph in question is a path.

\begin{cor}
FREE-FLOOD-IT can be solved for any path $P$ in time $O(|P|^6)$, and $c$-FREE-FLOOD-IT can be solved in time $O(|P|^4)$.
\label{polypath}
\end{cor}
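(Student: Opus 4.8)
The plan is to reduce flooding a path to the endpoint-linking problem already solved by Theorem~\ref{connection-time}. Write $u$ and $v$ for the two endpoints of $P$. Since $P$ is itself a path, the only $u$-$v$ path in $P$ is $P$ itself, so $\mathcal{P}_P(u,v) = \{P\}$; applying Lemma~\ref{min-path} with $G = P$ therefore gives $m_{P,\omega}(u,v,d) = m(P,\omega|_P,d) = m(P,\omega,d)$ for every colour $d$, and taking the minimum over $d$ yields $m_{P,\omega}(u,v) = m(P,\omega)$. In other words, the number of moves needed to flood the path is exactly the number needed to link its two endpoints. Hence it suffices to run the algorithm of Theorem~\ref{connection-time}, which computes $m(x,y)$ for all pairs and in particular for $(u,v)$, in time $O(|V|^3|E||C|^2)$.

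It then remains only to substitute the path parameters. Here $|V| = |P|$ and $|E| = |P| - 1 = O(|P|)$. For the bound on $|C|$ in the general case, I would note that at most $|P|$ distinct colours appear in $\omega$ and that we lose nothing by restricting the colour set to these: flooding with a colour absent from $\omega$ is never advantageous (a standard and easily checked fact for Free-Flood-It), so the minimum defining $m(P,\omega)$ is attained at a present colour. Thus we may take $|C| \leq |P|$, and Theorem~\ref{connection-time} runs in time $O(|P|^3 \cdot |P| \cdot |P|^2) = O(|P|^6)$, giving the first bound.

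For $c$-FREE-FLOOD-IT the colour set has fixed size $|C| = c = O(1)$, so the same substitution yields $O(|P|^3 \cdot |P| \cdot c^2) = O(|P|^4)$, as claimed.

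The reduction itself is immediate from Lemma~\ref{min-path}, so there is no serious obstacle, consistent with this being an ``easy corollary''. The only point needing genuine care is the bound $|C| \leq |P|$ in the unbounded-colour case, that is, justifying that colours not appearing in $\omega$ may be discarded without affecting $m(P,\omega)$; this is the step where I would be most careful to make the argument watertight, though it is routine.
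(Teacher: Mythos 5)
Your proof is correct and takes essentially the same route as the paper: reduce flooding $P$ to linking its two endpoints $u$ and $v$, invoke Theorem~\ref{connection-time}, and substitute $|V| = |P|$, $|E| = O(|P|)$, and $|C| \leq |P|$ (respectively $|C| = c$). The paper gets the endpoint identity directly by noting that a sequence floods $P$ if and only if it links $u$ and $v$ (rather than via Lemma~\ref{min-path}), and it simply asserts the colour-set bound that you justify by discarding absent colours, so your version is, if anything, slightly more careful on that routine point.
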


\begin{proof}
Let $P$ be a path with colouring $\omega$, and let $u$ and $v$ be the two end-vertices of $P$.  Then a sequence of moves $S$ makes $P$ monochromatic if and only if it links $u$ and $v$, so $m(P,\omega) = m_{P,\omega}(u,v)$.  But by Theorem \ref{connection-time}, we can compute $m_{G,\omega}(u,v)$ for two vertices in any arbitrary graph in time $O(|V|^3|E||C|^2)$.  As $P$ has $O(|P|)$ edges, and we cannot possibly have a colour-set of size greater than $|P|$, the complexity of this algorithm is bounded by $O(|P|^6)$, or $O(|P|^4)$ if the colour-set has fixed size.
\end{proof}

We can also apply this result to the free variant of the game played on rectangular boards of fixed height.  It follows immediately from Corollary \ref{polypath} that FREE-FLOOD-IT restricted to $1 \times n$ boards can be solved in time $O(n^6)$ (and $c$-FREE-FLOOD-IT in time $O(n^4)$).  A further corollary is an additive approximation to $c$-FREE-FLOOD-IT played rectangular boards of fixed height $k$.

\begin{cor}
For any fixed $k$, we can compute a constant additive approximation to $c$-FREE-FLOOD-IT, restricted to $k \times n$ boards, in time $O(n^4)$.
\label{kxn-approx}
\end{cor}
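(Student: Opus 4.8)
The plan is to show that, up to an additive constant depending only on $k$ and $c$, the flood number $m(B)$ is governed by the cheapest cost of linking the two ends of the board, and that this quantity can be read off from the all‑pairs connection times produced by Theorem~\ref{connection-time}. Write $G_B$ for the (contracted) board graph with colouring $\omega_B$, let $\mathcal{L}$ and $\mathcal{R}$ be the sets of vertices of $G_B$ corresponding to regions meeting the leftmost and rightmost columns of $B$ respectively, and set
$$ L^* = \min_{u \in \mathcal{L},\, v \in \mathcal{R}} m_{G_B,\omega_B}(u,v). $$
First I would run the algorithm of Theorem~\ref{connection-time} on $G_B$; since $G_B$ has $O(kn)$ vertices, $O(kn)$ edges and uses at most $c$ colours, this takes time $O((kn)^3\cdot kn\cdot c^2)=O(n^4)$ for fixed $k$ and $c$, and outputs $m_{G_B,\omega_B}(u,v)$ for every pair, hence $L^*$. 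I would then prove
$$ L^* \;\le\; m(B) \;\le\; L^* + c(k-1), $$
so that returning $L^*$ is a constant additive approximation.

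The lower bound is immediate: any sequence flooding $B$ makes $G_B$ monochromatic and connected, so in particular it links every pair of vertices, giving $m(B) \ge m_{G_B,\omega_B}(u,v)$ for all $u,v$, and a fortiori $m(B) \ge L^*$. For the upper bound I would exhibit a flooding strategy. Choose $u\in\mathcal{L}$, $v\in\mathcal{R}$ and a colour $d$ attaining $L^*$, and play an optimal length‑$L^*$ sequence linking them in colour $d$ (such a sequence exists by Lemma~\ref{min-path}); this produces a single monochromatic component $M$ of colour $d$ containing both $u$ and $v$. Since each column of the board is a vertex cut separating its left part from its right part, any connected cell‑set meeting both the leftmost and the rightmost column must meet every column; hence $M$ meets every column, and so every cell of $B$ lies within vertical distance $k-1$ of a cell of $M$, giving distance at most $k-1$ from $M$ in $G_B$. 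I would then clean up by repeatedly cycling the colour of $M$ through all $c$ colours: in one such cycle of at most $c$ moves, every region currently adjacent to $M$ is absorbed at the moment $M$ takes that region's colour, so $M$ grows to contain its whole neighbourhood. After $k-1$ cycles $M$ has swallowed everything, yielding a flood in at most $L^* + c(k-1)$ moves.

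The main obstacle is the upper bound, and specifically the two structural facts behind the clean‑up. The first is that linking a vertex of $\mathcal{L}$ to a vertex of $\mathcal{R}$ necessarily produces a monochromatic component touching every column; I would justify this via the column‑cut observation together with the fact that the linking sequence genuinely creates a single connected monochromatic area. The second is that each cycle of colour changes absorbs exactly the current distance‑one layer: here one must check that a region adjacent to $M$ stays adjacent until it is absorbed, so that it is guaranteed to be swallowed during the cycle in which $M$ assumes its colour, whence $k-1$ cycles suffice because $M$ meets every column and no cell is further than $k-1$ from it. Each of these steps is routine once the column‑cut fact is in place, but pinning down the clean‑up constant — and confirming that it depends only on $k$ and $c$ and not on $n$ — is where the real content of the argument lies.
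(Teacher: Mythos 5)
Your proposal is correct and follows essentially the same route as the paper: both compute the connection time between the left and right ends of the board via Theorem~\ref{connection-time} (in time $O(n^4)$ for fixed $k$ and $c$), bound $m(B)$ below by the necessity of linking the two ends, and bound it above by the link-then-cycle-colours clean-up, yielding the same additive constant $c(k-1)$. The only cosmetic differences are that you minimise over all left--right pairs where the paper fixes one arbitrary pair, and that you spell out the column-cut and layer-absorption details that the paper leaves implicit.
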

\begin{proof}
Let $B$ be a $k \times n$ Flood-It board, with at most $c$ colours, and let $u$ (respectively $v$) be a vertex in $G_B$ corresponding to a square incident with the left-hand (respectively right-hand) edge of the board.  Suppose the sequence of moves $S$ floods $B$. Then $S$ clearly links $u$ and $v$, so we have $|S| \geq m_{G_B,\omega_B}(u,v)$.  But observe also that one strategy to flood the board would be to create a monochromatic path from $u$ to $v$, and then cycle through all $c$ colours at most $k-1$ times to absorb all remaining regions.  Thus we have $m(B) \leq m_{G_B,\omega_B}(u,v) + c(k-1)$.  Hence
\begin{equation*}
m_{G_B,\omega_B}(u,v) \leq m(B) \leq m_{G_B,\omega_B}(u,v) + c(k-1),
\label{path-approx}
\end{equation*}
and $m_{G_B,\omega_B}(u,v)$ gives an additive approximation to $m(B)$.

But by Theorem \ref{connection-time}, we can compute $m_{G_B,\omega_B}(u,v)$ in time $O(n^4)$ for fixed $k$ and $c$, thus obtaining in polynomial time an additive approximation to $m(B)$.
\end{proof}

\section{Rectangular boards of constant height}
\label{complexity}

In contrast to our approximation result in Corollary \ref{kxn-approx}, we see in this section that, even for small values of $k$, it remains NP-hard to solve flood-filling problems \emph{exactly} on $k \times n$ boards.

In particular, we show that both 4-FIXED-FLOOD-IT and 4-FREE-FLOOD-IT remain NP-hard when restricted to $3 \times n$ boards.  This improves on the result of Clifford, Jalsenius, Montanaro and Sach in \cite{clifford} that FREE-FLOOD-IT remains NP-hard on such boards.  Both our results are proved by means of reductions from the decision version of Shortest Common Supersequence (SCS), shown to be NP-complete over a binary alphabet by R\"{a}ih\"{a} and Ukkonen in \cite{raiha81}.

Suppose we have an SCS instance consisting of $k$ strings $s_1, \ldots, s_k$ over a binary alphabet $\Sigma =\{1,2\}$, where each string has length at most $w$, and an integer $l$.  The problem is to determine whether $s_1, \ldots, s_k$ have a common supersequence of length at most $l$.  We will construct $3 \times n$ boards $B$ and $B'$ for the 4-FIXED-FLOOD-IT and 4-FREE-FLOOD-IT problems respectively (each using colours $\{1,2,3,4\}$), so that $m(B), m(B') \leq 2l + 3$ if and only if $s_1, \ldots s_k$ have a common supersequence of length at most $l$.

\subsection{The 4-FIXED-FLOOD-IT case}

We prove the following theorem.
\begin{thm}
4-FIXED-FLOOD-IT remains NP-hard when restricted to $3 \times n$ boards.
\label{3xn-fixed}
\end{thm}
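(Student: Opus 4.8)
The plan is to reduce from SCS over the binary alphabet $\Sigma=\{1,2\}$, exactly as foreshadowed: from strings $s_1,\dots,s_k$ and an integer $l$ I would build a $3\times n$ board $B$ on colours $\{1,2,3,4\}$ for which $m(B)\le 2l+3$ holds precisely when the $s_i$ admit a common supersequence of length at most $l$. The design principle is that a flooding sequence played at the fixed source should correspond to a candidate supersequence $t=t_1\cdots t_l$ over $\{1,2\}$, where processing one symbol of $t$ costs two moves — one move playing the alphabet colour $t_j$ and one move playing an auxiliary colour from $\{3,4\}$ acting as a separator — which accounts for the $2l$; the additive constant $3$ covers a fixed amount of initialisation and final clean-up. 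Both auxiliary colours are needed because consecutive separators must themselves be kept from merging, so $3$ and $4$ are used alternately.

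The heart of the layout is a horizontal \emph{spine} running the length of the board (say the middle row), with the fixed source at its left-hand end; the spine is coloured so that it already forms a single region, so that after the source has been played once every subsequent move at the source recolours the whole spine at once. This is what lets the construction encode a \emph{common} supersequence rather than a supersequence of the concatenation $s_1\cdots s_k$, even though the $k$ strings must be laid out one after another along the length of the board: because the spine recolours as a single block, each move broadcasts the same colour simultaneously to every string-gadget attached to it. Each string $s_i$ is then encoded by a gadget in the top and bottom rows, built from its symbols separated by alternating $3/4$ squares, and designed to be a \emph{supersequence gate}: the gadget is completely absorbed by the broadcast colour sequence if and only if the subsequence of alphabet colours in that broadcast contains $s_i$. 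Establishing this gate property — in particular arranging the geometry and separators so that the symbols of $s_i$ must be matched \emph{in order}, and not merely as a multiset, despite every gadget square lying against the broadcasting spine — is the key technical ingredient of the construction.

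With the gadgets in hand, the forward direction is direct and constructive. Given a common supersequence $t$ of length at most $l$, I would exhibit an explicit flooding sequence of length at most $2l+3$: a constant number of initialisation moves to prime the spine, then for each symbol $t_j$ the pair of moves (play $t_j$, then play the next separator colour), and a constant number of closing moves. By the gate property every gadget is absorbed, so the board is flooded, giving $m(B)\le 2l+3$.

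The reverse direction is where I expect the real difficulty, and I would treat it as the main obstacle. From an \emph{arbitrary} flooding sequence $S$ of length at most $2l+3$ I must extract a common supersequence of length at most $l$, but nothing forces the player to follow the intended alternating ``alphabet colour, separator'' discipline: the player may interleave colours, play separators out of turn, or absorb gadgets only partially. The plan is therefore to prove a normalisation lemma showing that any flooding sequence can be rewritten, without increasing its length, into the canonical broadcast form above — arguing that moves which do not conform to the intended pattern can be reordered or discarded (using monotonicity-style exchange arguments in the spirit of the path lemmas of Section~\ref{general}) and that no dishonest schedule beats the honest one. Once $S$ is in canonical form, its alphabet colours read in order are forced by the gate property to be a common supersequence of $s_1,\dots,s_k$, and the bound $|S|\le 2l+3$ then yields a common supersequence of length at most $l$. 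Combined with the NP-completeness of binary SCS, this establishes the theorem.
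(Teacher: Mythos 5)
Your proposal is a reduction template whose two essential gadgets are missing, and in one case the geometry you commit to actually rules the gadget out. The ``supersequence gate'' is the whole content of the reduction, and your spine layout cannot realise it: if the middle row is a single monochromatic region and every gadget square lies against it, then whenever the spine is given colour $c$ \emph{all} adjacent gadget squares of colour $c$ are absorbed simultaneously. Adjacency to one broadcasting region can therefore only test \emph{which} colours are played, never the order in which they are played, so the gate property you need (gadget absorbed iff the broadcast contains $s_i$ as a subsequence) degenerates to ``absorbed iff each symbol of $s_i$ is played at least once.'' The paper's gadget $G_i$ (Figure \ref{G_i}) avoids exactly this trap: it is a $2\times(2|s_i|+1)$ block whose bottom row is a one-ended corridor containing the symbols of $s_i$ separated by colour-3 squares, sealed from the rest of the board by a strip of colour 4 and reachable by the flood only at one end. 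Squares in a one-ended corridor can only be absorbed sequentially, which is what forces in-order matching. Note also that colour 4 there is a one-shot lock, played exactly once (as move $2l+1$); your scheme of using 3 and 4 as alternating separators would require many colour-4 moves and is incompatible with keeping such a corridor sealed.

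The reverse direction has a second gap of the same kind. You plan a normalisation lemma (``any flooding sequence of length at most $2l+3$ can be rewritten into canonical broadcast form''), but you give no mechanism that would make this true, and exchange arguments do not supply one: in the fixed game every move is played at the source, so a schedule is just a colour sequence, and permuting colours genuinely changes what gets flooded -- there is no a priori domination of dishonest sequences by honest ones. The paper never normalises. Instead it installs a pace-setter rectangle $R$ (Figure \ref{rectangle-R}) at the right-hand end of the board: each move can advance the flooded boundary into $R$ by at most one column, so at least $2l+3$ moves are needed to flood the board at all, and the \emph{only} sequences of exactly $2l+3$ moves that flood $R$ are those of the form $a_1\,3\,a_2\,3\cdots a_l\,3\,4\,2\,1$ with $a_j\in\{1,2\}$ (Lemma \ref{strat=>cs}). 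The canonical form is thus forced by geometry rather than obtained by rewriting, and the common supersequence is simply read off the colours $1,2$ among the first $2l$ moves. Until you construct both a timing gadget of this kind and an order-enforcing gadget compatible with a $3\times n$ board, the proposal does not yield a proof.
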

To show the reduction, we construct a $3 \times n$ Flood-It board with four colours as follows.  For each $s_i$, we include a $2 \times (2|s_i| + 1)$ gadget $G_i$ as illustrated in Figure \ref{G_i}, where $s_i[j]$ denotes the $j^{th}$ character of $s_i$.

\begin{figure} [h]
\centering
\includegraphics[width=0.8\linewidth]{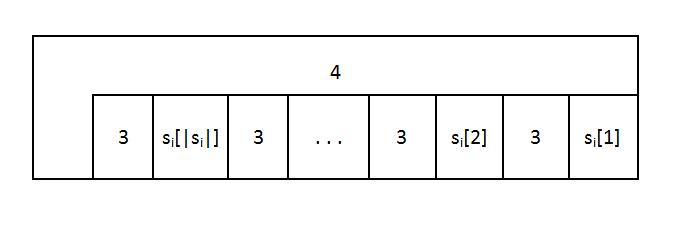}
\caption{The gadget $G_i$}
\label{G_i}
\end{figure}

We then place these in a $3 \times n$ board filled with colour 3 as illustrated in Figure \ref{board-B}, and add a section $R$ at the end, where $R$ is as shown in Figure \ref{rectangle-R}.  Note we can take $n \leq k(2w+2) + 2l +3$.

\begin{figure} [h]
\centering
\includegraphics[width=0.7\linewidth]{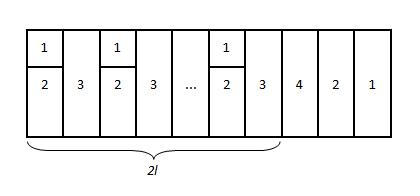}
\caption{The rectangle $R$}
\label{rectangle-R}
\end{figure}

\begin{figure} [h]
\centering
\includegraphics[width=0.8\linewidth]{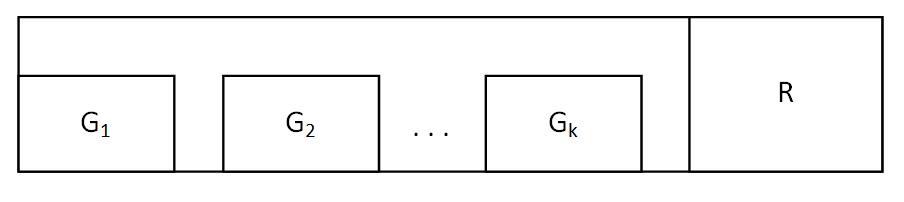}
\caption{The board $B$}
\label{board-B}
\end{figure}

We now show, in the next two lemmas, that $s_1,\ldots,s_k$ have a common supersequence of length at most $l$ if and only if we can flood $B$ in $2l + 3$ steps.

\begin{lma}
If $s_1, \ldots, s_k$ have a common supersequence of length at most $l$, then we can flood the board $B$ (starting from the top left) in $2l + 3$ steps.
\label{cs=>strat}
\end{lma}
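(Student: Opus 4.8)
The plan is to show the ``easy'' direction of the reduction: given a common supersequence $t = t_1 t_2 \cdots t_p$ of the strings $s_1, \ldots, s_k$ with $p \leq l$, I would exhibit an explicit flooding strategy for $B$ that uses exactly $2l+3$ moves when playing from the top-left square. The fundamental idea is that each gadget $G_i$ encodes the string $s_i$ in a $2 \times (2|s_i|+1)$ block, and that a single sweep of colour changes at the play vertex, corresponding to reading off the characters of $t$ one at a time, should simultaneously absorb the correct cells of \emph{every} gadget because $t$ is a supersequence of each $s_i$. The background colour $3$ and the terminal rectangle $R$ are designed so that a constant number of additional ``cleanup'' moves finishes the board.

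\textbf{The core of the strategy.}
First I would describe the move sequence explicitly. Because we always play at the fixed top-left vertex, each move recolours the single growing monochromatic component containing that vertex. I expect the strategy to consist of one move per character of $t$, interleaved with moves that flood through the background colour $3$, giving roughly $2p$ moves, plus a small constant number (three) of moves to initialise the play vertex and to absorb $R$ at the end. The key structural claim to verify is that after processing the prefix $t_1 \cdots t_j$, the flooded component contains, in each gadget $G_i$, exactly the portion corresponding to the longest subsequence of $s_i$ that embeds into $t_1 \cdots t_j$. Since $t$ is a supersequence of $s_i$, by the time we finish processing all of $t$ every character cell of every $G_i$ has been absorbed; the alternation with colour $3$ ensures the gadgets are connected to the play vertex through the background. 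I would verify this by induction on $j$, checking at each step that the character cell $s_i[\,\cdot\,]$ matching $t_j$ (if the relevant pointer advances) becomes adjacent to the current flooded region in the colour dictated by $t_j$.

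\textbf{Counting and the obstacle.}
The arithmetic should then yield $2p + 3 \leq 2l + 3$ moves, and I would pad with trivial repeated moves if $p < l$ so that the count is exactly $2l+3$, or simply note the bound suffices for the ``$\leq$'' claim. The main obstacle I anticipate is not the move-counting but verifying the geometric adjacency claims against the precise layout of $G_i$, $R$, and the colour-$3$ padding in Figures \ref{G_i}, \ref{rectangle-R}, and \ref{board-B}: one must confirm that recolouring the play component to colour $t_j$ really does reach and absorb the next unmatched character cell in every gadget simultaneously, and that the two-move-per-character rhythm (one move to extend through colour $3$, one to match a character) is genuinely forced by the gadget geometry rather than an artefact. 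In particular I would need to check carefully that no gadget ``falls behind'' --- that the supersequence property of $t$ translates exactly into all pointers reaching the end of their respective $s_i$ after $\leq p$ character-matching moves --- and that the final three moves suffice to mop up $R$ and any residual background, thereby completing the flood in the claimed $2l+3$ steps.
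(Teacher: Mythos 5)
Your proposal takes essentially the same approach as the paper: pad the common supersequence to length exactly $l$, play its characters interleaved with colour-$3$ moves (your two-moves-per-character rhythm), and finish with three fixed cleanup moves, for $2l+3$ in total. The paper's proof simply makes this explicit---the move sequence is $a_1 3 a_2 3 \ldots a_l 3$ followed by $4,2,1$, where the colour-$4$ move is what absorbs the colour-$4$ region inside each gadget $G_i$ (a layout detail you could not see from the figures), and the observation that $s_i[1]3s_i[2]3\ldots s_i[|s_i|]3$ is a subsequence of $a_1 3 a_2 3\ldots a_l 3$ plays the role of your pointer induction.
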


\begin{proof}
Let $a_1 \ldots a_l$ be a common supersequence of length exactly $l$ (padding a shortest sequence with $1$s if necessary).  Then we claim that the sequence of moves $a_13a_23a_33\ldots a_{l-1}3a_l3421$, of length $2l + 3$, floods the board.  First observe that this sequence floods $R$: each move extends the external area into $R$ by at least part of one column, and the final two moves of colours $2$ and $1$ respectively will flood any remaining unflooded partial columns.  But this sequence will also flood $G_i$ for each $i$: $s_i$ is a subsequence of $a_1 \ldots a_l$ so $s_i[1]3s_i[2]\ldots s_i[|s_i|]3$ is a subsequence of $a_13a_23\ldots a_l3$, and the first $2l$ moves will flood all of $G_i$ not coloured 4, before the $(2l+1)^{st}$ move floods the region coloured 4.  So this sequence of $2l + 3$ moves does indeed flood $B$.
\end{proof}

\begin{lma}
If we can flood $B$ in at most $2l + 3$ steps, starting from the top left corner at each move, then $s_1, \ldots s_k$ have a common supersequence of length at most $l$.
\label{strat=>cs}
\end{lma}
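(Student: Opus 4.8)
The plan is to prove this converse direction constructively: from any strategy flooding $B$ in at most $2l+3$ moves, I will extract a common supersequence of $s_1,\ldots,s_k$ of length at most $l$. The starting observation is that, because every move is played at the top-left square, a strategy is nothing more than a sequence of colours $c_1,\ldots,c_t$ with $t\le 2l+3$: after the $i$th move the monochromatic component containing the top-left square (the \emph{external area} $A$) has colour $c_i$, and $A$ only ever grows, absorbing at move $i$ precisely those neighbouring regions whose colour is $c_i$. I would therefore analyse the flood purely in terms of how this single growing area eats its way through the board. A key consequence is that a region of colour $\gamma$ is absorbed exactly at a move of colour $\gamma$; in particular the colours of the moves at which $A$ absorbs the successive cells of a gadget are \emph{forced} to equal the colours of those cells.

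Next I would establish the letter-reading mechanism. Each $G_i$ is a corridor that $A$ can enter only at its mouth and must consume cell by cell, and by construction the colours of its cells, in the order reached, spell out $s_i[1],3,s_i[2],3,\ldots,s_i[|s_i|],3$ before the colour-$4$ block. Hence there are indices $j_1<j_2<\cdots$ at which these cells are absorbed, with $c_{j_1},c_{j_2},\ldots$ equal to $s_i[1],3,s_i[2],3,\ldots$; deleting the colour-$3$ entries, the letter moves among them read $s_i$ in order. The colour-$3$ separators are what force this: once $A$ has absorbed a letter cell it meets a colour-$3$ cell, so no single letter move can advance two letters of $G_i$, and each letter of $s_i$ is read by a distinct letter move. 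Since all $k$ gadgets are flooded by the one sequence $c_1,\ldots,c_t$, the subsequence of $c_1,\ldots,c_t$ consisting of the moves of colour $1$ or $2$ is simultaneously a supersequence of every $s_i$, that is, a common supersequence.

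It remains to bound by $l$ the number of letter moves that read the gadgets, and this is where the colour-$4$ regions and the rectangle $R$ do their work. Let $\tau$ be the first move of colour $4$. I would first show that every gadget cell of colour $1$ or $2$ is absorbed before move $\tau$, since the colour-$4$ regions cap the far ends of the corridors; thus the letter moves among $c_1,\ldots,c_{\tau-1}$ already form a common supersequence, and it suffices to bound their number. Every move before $\tau$ has colour in $\{1,2,3\}$, so this number is $(\tau-1)$ minus the number of colour-$3$ moves before $\tau$. I would then charge the budget using $R$: flooding $R$ forces at least $l$ colour-$3$ moves, all occurring before $\tau$, and forces two final cleanup moves of colours $2$ and $1$ that fall after the gadgets are flooded. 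Since $\tau$ is a colour-$4$ move followed by at least two cleanup moves, $t\ge \tau+2$, whence $\tau-1\le t-3\le 2l$; as at least $l$ of the $\tau-1$ pre-$\tau$ moves are colour $3$, at most $l$ are letter moves. These read every gadget and so form a common supersequence of length at most $l$, exactly matching the strategy exhibited in Lemma~\ref{cs=>strat}.

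The delicate points, and where I expect the real work to lie, all concern ruling out that an \emph{arbitrary} strategy undercuts this bookkeeping. I must verify that no single move can both advance a gadget letter and double as one of the forced colour-$3$ or cleanup moves (which follows from the colour-$3$ separators and from placing the cleanup after $\tau$), that the colour-$4$ regions genuinely separate the gadget-reading phase from the cleanup, and above all that $R$ forces its $l$ colour-$3$ moves before $\tau$ and its two trailing letter moves after $\tau$ \emph{regardless of the order} in which the player proceeds. Establishing these structural facts about $B$ itself, rather than merely about the canonical strategy of Lemma~\ref{cs=>strat}, is precisely what turns the easy direction into this converse, and the forcing property of $R$ is the main obstacle.
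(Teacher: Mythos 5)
Your high-level plan is the same as the paper's in spirit (use $R$ plus the tight budget to constrain the move sequence, read each $s_i$ off the colour-$1$/$2$ moves, bound their number by $l$), but the proposal omits the one argument that everything else rests on, and you say so yourself: the ``forcing property of $R$'' is left as an acknowledged obstacle rather than proved. The paper's proof \emph{begins} with exactly that step, and it is short: $R$ consists of $2l+3$ columns which the external area can only enter sequentially, one column per move, so flooding $R$ at all takes at least $2l+3$ moves; under the budget of $2l+3$, every single move must therefore advance the boundary into $R$ by exactly one column, and hence the $i$th move must have the colour of the $i$th column of $R$. This pins the entire sequence down as $a_1 3 a_2 3 \cdots a_l 3\,4\,2\,1$ with $a_i \in \{1,2\}$. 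All the claims your accounting needs --- that the first colour-$4$ move is at position $2l+1$, that at least $l$ colour-$3$ moves precede it, that at least two moves follow it, and that no colour-$3$ move occurs after it --- are immediate corollaries of this structure, and none of them can be established without it: absent the column-counting argument, nothing stops a strategy from playing colour $4$ early (so $\tau$ small, with few or no colour-$3$ moves before it), which is why ``flooding $R$ forces at least $l$ colour-$3$ moves, all before $\tau$'' is not a standalone fact about $R$ but precisely the statement requiring proof. Your $\tau$-based bookkeeping is thus an unsupported (and, once the forcing lemma is in hand, unnecessary) detour around the real argument.

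A second, related problem: the one justification you do offer for a key step is not valid. You claim every gadget cell of colour $1$ or $2$ is absorbed before $\tau$ ``since the colour-$4$ regions cap the far ends of the corridors.'' The capping by itself rules out nothing --- a priori the player could consume part of a corridor, play colour $4$, and then resume with further colour-$3$ and letter moves. The correct reason, which is the paper's, is that the forced structure ends $4,2,1$, so no colour-$3$ move occurs after move $2l$; hence the deepest colour-$3$ square of each $G_i$, and therefore (by the sequential, right-to-left consumption of the corridor) its whole bottom row, must be flooded within the first $2l$ moves, whose letter moves number exactly $l$. So the gap is genuine: what you have is a correct plan and an arithmetic scheme, but the forcing lemma that makes both sound is missing, and it is the heart of the proof.
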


\begin{proof}
First observe that we cannot flood $R$ from the outside in fewer than $2l + 3$ steps, as each move can only move the boundary of the external area to the right by one column.  Moreover, any sequence of $2l + 3$ moves that floods $R$ must consist of $l$ $1$s or $2$s, alternated with $3$s, then finally $4,2,1$.

Suppose such a sequence $c_1, \ldots c_{2l+3}$ also floods every $G_i$.  Note that the external area never has colour $3$ after the only move of colour $4$.  So, in order to flood the leftmost square of colour $3$ in each $G_i$, we must in fact flood the bottom row of each $G_i$ sequentially from the right, and moreover we must have flooded this row by the end of the $(2l)^{th}$ move.  But then, for each $i$, $s_i$ must be a subsequence of $c_1, \ldots c_{2l}$ restricted to $\{1,2\}$, which is a sequence of length $l$.  Hence we have a common supersequence of $s_1, \ldots s_k$ of length $l$, as required.
\end{proof}

\begin{proof}[Proof of Theorem \ref{3xn-fixed}]
The reduction from Shortest Common Supersequence follows immediately from Lemmas \ref{cs=>strat} and \ref{strat=>cs}.
\end{proof}

\subsection{The 4-FREE-FLOOD-IT case}

We now prove an analogous theorem for the free variant of the game.

\begin{thm}
4-FREE-FLOOD-IT remains NP-hard when restricted to $3 \times n$ boards.
\label{3xn-free}
\end{thm}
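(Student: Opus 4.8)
The plan is to adapt the reduction from Shortest Common Supersequence already used for Theorem~\ref{3xn-fixed}, modifying the board so that the freedom to play moves anywhere does not give the player any advantage over the fixed-vertex game. The difficulty is precisely this extra freedom: in the FIXED case the proof of Lemma~\ref{strat=>cs} relied on the fact that every move extends a single external flooded area from the left, forcing the $3\times n$ board to be swept left-to-right, which is exactly what encodes the supersequence. In the FREE variant a clever player might flood several gadgets in parallel, play moves deep inside the board, or build up monochromatic areas that are later merged, so I cannot assume the board is processed sequentially. The construction of $B'$ must therefore rule out such shortcuts, so that any flooding sequence of length at most $2l+3$ can still be decoded into a common supersequence of length $l$.

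First I would define $B'$ by taking the board $B$ of the previous subsection and reinforcing the terminating section $R$ (and possibly padding the region of colour~$3$ separating the gadgets) so that a lower bound of $2l+3$ moves is forced independently of where moves are played. The natural tool here is a ``connection'' lower bound: if $u$ and $v$ are chosen as vertices at the extreme left and right of $B'$, then any sequence flooding $B'$ must in particular link $u$ and $v$, so by the results of Section~\ref{general} (Lemma~\ref{min-path} and Corollary~\ref{triangle}) the number of moves is at least $m_{G_{B'},\omega_{B'}}(u,v)$. I would design the colouring along the ``spine'' of $B'$ so that $m_{G_{B'},\omega_{B'}}(u,v) = 2l+3$ for the padded instance, making this bound tight and matching the easy direction.

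Next I would prove the two implications mirroring Lemmas~\ref{cs=>strat} and~\ref{strat=>cs}. The forward direction (a length-$l$ common supersequence yields a flooding sequence of length $2l+3$) is essentially unchanged: the explicit sequence $a_1 3 a_2 3 \cdots a_l 3\,4\,2\,1$ exhibited before still floods $B'$, since allowing moves anywhere can only help and the same sequence was played from a single corner. The reverse direction is where the real work lies. Here I would argue that, although moves may be played anywhere, the structure of $B'$ forces any sequence of length $\le 2l+3$ to behave essentially like the fixed sequence: the tight connection bound $m_{G_{B'},\omega_{B'}}(u,v)=2l+3$ means every one of the $2l+3$ moves must contribute to linking $u$ and $v$ along the spine, so no moves can be ``wasted'' flooding gadgets independently or playing in parasitic areas. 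Using a critical $u$-$v$ path (Lemma~\ref{general-nice-path}) together with Lemma~\ref{on-good-path}, I would conclude that every move lies on this path and that the colours played, restricted to $\{1,2\}$ and read in order, form a string of length $l$ that is a common supersequence of all the $s_i$, exactly as in Lemma~\ref{strat=>cs}.

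The main obstacle will be verifying that the colourings of the gadgets and the padding can be simultaneously arranged so that (i) the connection value $m_{G_{B'},\omega_{B'}}(u,v)$ is exactly $2l+3$, and (ii) reading off the colours of an optimal linking sequence recovers a common supersequence, while (iii) the freedom to merge areas grants no saving. I expect to handle this by giving each gadget $G_i$ a colouring forcing its bottom row to be absorbed in the left-to-right order and by sizing $R$ so that its length alone already requires $2l+3$ spine moves; then the lower bound from the link between $u$ and $v$ pins down the total, and the monotonicity results of Section~\ref{general} guarantee that processing gadgets ``off the spine'' cannot reduce the count. Once the tight bound forces every move onto the critical path, the decoding argument of Lemma~\ref{strat=>cs} transfers almost verbatim, and the reduction from SCS gives NP-hardness of 4-FREE-FLOOD-IT on $3 \times n$ boards.
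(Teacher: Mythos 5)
Your high-level plan (reduce from SCS, reuse the gadgets, force the lower bound $2l+3$ structurally, then decode) matches the paper's strategy, but there are two genuine gaps. The first is the construction itself. You propose to ``reinforce'' or ``size'' the one-sided terminator $R$ so that the link value between the extreme left and right vertices satisfies $m_{G_{B'},\omega_{B'}}(u,v)=2l+3$. These two requirements are in tension: in the free game a path of $N$ monochromatic regions can be linked in roughly $\lceil (N-1)/2\rceil$ moves, because a move played in the interior can merge two regions at once. So to force the value $2l+3$ you need roughly $4l+7$ regions on every end-to-end path; but if all of those regions sit in a single block $R$ on the right, the sweeping strategy of Lemma~\ref{cs=>strat} (which eats one column of $R$ per move from the external area) now needs about $4l+7$ moves, and the forward direction of the reduction breaks. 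The paper resolves exactly this by adding a \emph{mirrored} copy $R'$ at the \emph{left} end of the board: end-to-end paths then cross $4l+7$ regions (giving the lower bound $2l+3$, since each move reduces the count by at most two), while the sweep played in the central external area absorbs one column of $R$ \emph{and} one column of $R'$ simultaneously, keeping the upper bound at $2l+3$. This two-sided construction is the essential new idea of the free case, and it is absent from your proposal; without it, conditions (i) and (iii) of your plan cannot be met simultaneously with the forward direction.

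The second gap is the claim that, once every move lies in an area meeting a critical $u$-$v$ path (via Lemmas~\ref{general-nice-path} and~\ref{on-good-path}), the decoding of Lemma~\ref{strat=>cs} transfers ``almost verbatim.'' It does not: the fixed-case decoding relies on a single external area growing left-to-right, whereas a free player whose moves all lie on a critical path can still play them out of order, grow several disjoint monochromatic areas that merge later, or place moves inside $R$, $R'$, or the gadgets (note also that the critical path is determined by the play, not by you, so it need not be your intended ``spine''). The paper's Lemma~\ref{freestrat=>cs} closes these loopholes with a chain of dedicated arguments: every move must reduce the number of areas on an end-to-end path by exactly two (hence no move inside any $G_i$); colour $4$ can be played only once, necessarily in the external area, and only after the area $A$ has been linked, which pins the last three moves to $4,2,1$; colour $3$ is then never played after colour $4$, forcing the bottom row of each $G_i$ to be flooded sequentially from the right within the first $2l$ moves; and finally a counting argument ($2l+4$ regions of colours $1,2$ on a shortest end-to-end path, four of them untouched in the first $2l$ moves) bounds the decoded string by length $l$. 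Your proposal would need all of these steps, or substitutes for them; they are the real content of the reverse direction, not a routine transfer from the fixed case.
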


The construction of the Flood-It board $B'$ used to prove Theorem \ref{3xn-free} is very similar to that in the previous section.  The only difference is that we also include a second rectangular section, $R'$, located at the left-hand end of the board (as illustrated in Figure \ref{board-B'}).  $R'$ is identical to $R$ except that it is reflected in a vertical axis.  In this case we can take $n \leq k(2w+2) + 4l + 7$.

\begin{figure} [h]
\centering
\includegraphics[width=0.8\linewidth]{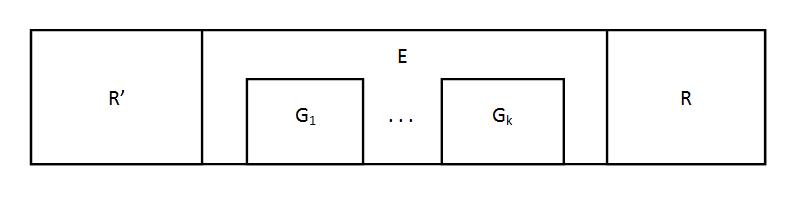}
\caption{The board $B'$}
\label{board-B'}
\end{figure}

It remains to show that $s_1,\ldots,s_k$ have a common supersequence of length at most $l$ if and only if we can flood $B'$ in $2l + 3$ steps.

\begin{lma}
If $s_1, \ldots, s_k$ have a common supersequence of length at most $l$, then we can flood the board $B'$ in $2l+3$ steps.
\label{cs=>freestrat}
\end{lma}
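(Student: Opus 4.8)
The plan is to recycle the fixed-game flooding sequence almost verbatim, using the freedom of the free game only to choose where each move is played. First I take a common supersequence $a_1 \ldots a_l$ of $s_1, \ldots, s_k$ of length exactly $l$ (padding a shortest one with $1$s, as in Lemma~\ref{cs=>strat}), and play the same colour sequence $a_1 3 a_2 3 \ldots a_l 3 4 2 1$ of length $2l+3$. The difference is that each move is now played at a vertex of the central colour-$3$ background --- the connected region that surrounds the gadgets $G_i$ and lies between $R'$ and $R$ --- rather than at the top-left corner. At every step this background forms a single monochromatic external area, adjacent to each $G_i$, to $R$ on its right and to $R'$ on its left, so one move recolours all of it and lets the flood advance outwards in both directions.

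Next I would observe that, on the part of $B'$ inherited from $B$, this strategy produces step by step exactly the flooded region that the fixed-game strategy of Lemma~\ref{cs=>strat} produces on $B$: the central background of $B'$ is precisely the background of $B$, and the colour sequence is identical. Hence the analysis of Lemma~\ref{cs=>strat} carries over unchanged. In particular, for each $i$, the fact that $s_i$ is a subsequence of $a_1 \ldots a_l$ guarantees that the first $2l$ moves flood every part of $G_i$ not coloured $4$, the $(2l+1)$st move (of colour $4$) absorbs the remaining region of $G_i$, and $R$ is flooded column by column, with the closing moves $2,1$ mopping up any partial columns.

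Finally I would flood the new section $R'$. Since $R'$ is $R$ reflected in a vertical axis, its columns read from the side adjacent to the central background appear in the same order as the columns of $R$ read from the corresponding side, so the identical colour sequence $a_1 3 a_2 3 \ldots a_l 3 4 2 1$ that drives the flood through $R$ also drives it through $R'$, merely in the opposite spatial direction. Because the external area abuts $R'$ from the outset, each move extends it into $R'$ by part of a column in mirror image to its action on $R$, and the final moves flood any remaining partial columns. The single sequence of $2l+3$ moves therefore floods $R'$, $R$ and every $G_i$, and so floods all of $B'$. The one point requiring genuine care --- and the main obstacle --- is checking that one colour sequence suffices to flood both $R$ and $R'$ simultaneously; this rests on the mirror symmetry of the two sections together with the fact that in the free game a single move recolours the whole connected external area, advancing the flood into $R$ and $R'$ at once.
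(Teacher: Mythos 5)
Your proposal is correct and takes essentially the same approach as the paper: the paper's proof is the single observation that one can replay the strategy of Lemma \ref{cs=>strat}, playing each move in the external area $E$. Your additional elaboration---that the flood evolves on the $B$-part exactly as in the fixed game, and that the mirror symmetry of $R'$ means the same colour sequence floods it simultaneously in the opposite direction---is precisely the justification the paper leaves implicit.
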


\begin{proof}
We can use exactly the same strategy as in Lemma \ref{cs=>strat}, playing in the external area $E$ at each move.
\end{proof}

\begin{lma}
If we can flood $B'$ in at most $2l+3$ moves, making moves anywhere on the board, then $s_1, \ldots, s_k$ have a common supersequence of length at most $l$.
\label{freestrat=>cs}
\end{lma}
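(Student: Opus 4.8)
The plan is to reduce the free case to the fixed case already handled in Lemma \ref{strat=>cs}, using the mirror structure $R'$ to force every move to behave like a move of the single central ``sweep'' available in the fixed game. Write $E$ for the external colour-$3$ area in the middle of the board (in which the gadgets are embedded), and let $z_R$ and $z_L$ be the regions lying deepest inside $R$ and $R'$ respectively. Since $R$ borders $E$ only, a move can affect the progress of flooding $R$ only if the area in which it is played meets $R\cup E$; write $M_R$ for the set of such moves, and define $M_{R'}$ analogously. The two ends of the board are disjoint and attached only through $E$, so $(R\cup E)\cap(R'\cup E)=E$, and hence a move lies in $M_R\cap M_{R'}$ exactly when it is played in an area meeting $E$.

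The heart of the argument is a double-counting step. First I would show that $|M_R|\geq 2l+3$: the layers of $R$ form a coloured spine joining $E$ to $z_R$, every $E$-$z_R$ path runs along this spine, and by the structure of $R$ each such path has flooding number at least $2l+3$, so $m_{G_{B'}}(E,z_R)\geq 2l+3$ by Lemma \ref{min-path}. Since any flooding of $B'$ must link $E$ and $z_R$, at least this many of its moves must be devoted to the linking, and these moves lie in $M_R$. The same reasoning gives $|M_{R'}|\geq 2l+3$. If a flooding sequence $S$ has length at most $2l+3$, then $|M_R\cup M_{R'}|\leq |S|\leq 2l+3$, so by inclusion-exclusion $|M_R\cap M_{R'}|\geq |M_R|+|M_{R'}|-|M_R\cup M_{R'}|\geq 2l+3$, forcing equality throughout: $|S|=2l+3$, and every move of $S$ lies in $M_R\cap M_{R'}$, that is, is played in an area meeting $E$ and simultaneously advances the flooding of both $R$ and $R'$.

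With this in hand the proof closes by re-running Lemma \ref{strat=>cs}. Because all $2l+3$ moves act on $R\cup E$ and together flood $R$, the fixed-case analysis of $R$ forces their colour sequence to consist of $l$ symbols from $\{1,2\}$ alternated with $3$'s and terminated by $4,2,1$; in particular the external area loses colour $3$ only at the single colour-$4$ move, which occurs at step $2l+1$. Exactly as before, to absorb the leftmost colour-$3$ square of each gadget $G_i$ the bottom row of $G_i$ must be swept from the right and completed by move $2l$, so $s_i$ is a subsequence of the restriction to $\{1,2\}$ of the first $2l$ moves, a word of length $l$. Hence $s_1,\dots,s_k$ have a common supersequence of length at most $l$.

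The main obstacle is the lower bound $|M_R|\geq 2l+3$ (and its mirror). In the fixed game this is immediate because each move can push the single external boundary right by only one column, but in the free game a player may recolour regions deep inside $R$, or exploit the gadgets, potentially ``pre-merging'' layers before they are absorbed into $E$. Making the bound rigorous therefore requires the connection machinery of Section \ref{general}: one must show that the moves achieving the $E$-$z_R$ linking can be confined to areas meeting $R\cup E$, using Lemma \ref{on-good-path} to restrict attention to a critical $E$-$z_R$ path lying in $R\cup E$, so that no recolouring inside $R'$ or inside a distant gadget can be charged against the flooding of $R$. Once $|M_R|,|M_{R'}|\geq 2l+3$ are secured, the counting and the reduction to the fixed case are routine.
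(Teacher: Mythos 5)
Your proposal hinges on the lower bound $m_{G_{B'}}(E,z_R)\geq 2l+3$, and this bound is false; this is a genuine gap, not a technicality that Lemma \ref{on-good-path} can repair. The fixed-game argument ``each move pushes the external boundary right by one column'' is valid only because in that game every move is played in the area containing the top-left square. In the free game a move may be played at an \emph{interior} region of the spine of $R$, and whenever the two neighbours of that region along the spine share a colour, the move merges three areas into one, reducing the number of monochromatic areas on the spine by two. The structure of $R$ makes this unavoidable: along any $E$--$z_R$ path the colours essentially alternate between colour $3$ and colours in $\{1,2\}$, so each non-$3$ region is sandwiched between two colour-$3$ regions, and recolouring it to $3$ collapses all three. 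Consequently $E$ and $z_R$ can be linked in roughly $l+O(1)$ moves, about half your claimed bound. So $|M_R|$ and $|M_{R'}|$ can each be about $l+3$, your inclusion-exclusion yields nothing, and the reduction to the fixed-case analysis collapses. Indeed, this halving phenomenon is precisely \emph{why} the construction for the free variant needs the mirrored copy $R'$ at all: a one-sided board admits cheap free strategies, so no argument that charges $2l+3$ moves to one end alone can succeed.

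The paper's proof takes a different and essentially unavoidable route: it counts monochromatic areas on \emph{end-to-end} paths (from the left edge of $R'$ to the right edge of $R$), of which there are initially $4l+7$, and uses the fact that any single move decreases this count by at most two. This gives the lower bound $|S|\geq 2l+3$ directly, and --- crucially --- forces every move of a sequence of length exactly $2l+3$ to decrease the count by exactly two. All the structural conclusions you wanted (no move inside any $G_i$, a unique colour-$4$ move played in the external area, the forced alternating colour pattern, and finally the bound $|s|\leq l$ via the count of $2l+4$ regions of colours $1$ and $2$ on a shortest end-to-end path) are then extracted from that exact-decrease property, rather than from a reduction to Lemma \ref{strat=>cs}. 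If you want to salvage your write-up, replace the per-end counting by this global path count; the last two paragraphs of your argument (sweeping the bottom rows of the $G_i$ and extracting the supersequence) are then close to the paper's, but the bound $|s| \leq l$ still needs the exact-decrease property, since the first $2l$ moves being ``in an area meeting $E$'' does not by itself bound how many of them have colours in $\{1,2\}$.
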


\begin{proof}

First observe that we need at least $2l +3$ moves to flood the board: initially the minimum number of monochromatic areas on any end-to-end path is $4l+7$, and as each move can decrease this by at most two, we do indeed require a minimum of $2l+3$ moves.  Moreover, to achieve this lower bound, every move must reduce the number of monochromatic areas lying on an end-to-end path by exactly two.  One consequence of this is that no move can be played inside any $G_i$.

Another consequence is that we can only make a move of colour 4 once in any optimal sequence: only two regions of this colour lie on any shortest end-to-end path, and so we can only make one colour 4 move that will decrease the path length as required.  However, to flood all the $G_i$, this single move of colour 4 must be played in $E$, so we cannot play colour 4 until the area $A$, containing $E$ and adjacent to both colour 4 regions in $R \cup R'$, has been linked.  It requires at least $2l$ moves to link $A$, so we make at least $2l + 1$ moves up to and including the move of colour 4.  These moves have no effect on the regions of colour 1 and 2 at the ends of the board, and it requires at least two moves to flood these remaining end-regions, so we can only possibly flood $B$ in $2l+3$ moves if we link $A$ (except for regions of colour 4) in exactly $2l$ moves, and then play colours 4,2, and 1 in the external area.

As colour 3 is then never played after colour 4 we see, as before, that the left-most square of colour 3 in each $G_i$ can only be flooded if the bottom row of each $G_i$ is flooded sequentially from the right, and this must be done within the first $2l$ moves.  Thus, if $s$ is the subsequence of the first $2l$ moves consisting of those that are made in an area containing $E$ and are of colour $1$ or $2$, then $s$ is a common supersequence of $s_1, \ldots, s_k$.

To complete the proof it therefore suffices to show that $|s| \leq l$.  But every move in $s$ reduces by two the number of monochromatic areas lying on the shortest end-to-end path, by means of flooding two areas of colour $c \in \{1,2\}$.  Initially there were only $2l + 4$ regions of colour 1 or 2 on any shortest end-to-end path, and four of these we know are not flooded within the first $2l$ moves, so $|s|$ can be at most $l$, as required.
\end{proof}

\begin{proof}[Proof of Theorem \ref{3xn-free}]
The reduction from Shortest Common Supersequence follows immediately from Lemmas \ref{cs=>freestrat} and \ref{freestrat=>cs}.
\end{proof}

\section{Conclusions and open problems}

In the case of the game played on rectangular $k \times n$ boards, we have shown that we can solve FREE-FLOOD-IT, restricted to $1 \times n$ boards, in polynomial time, and also that we can calculate in polynomial time an additive approximation in this case for any fixed $k$.  However, we have demonstrated that $c$-FREE-FLOOD-IT remains NP-hard when restricted to $k \times n$ boards for any $k \geq 3$ and $c \geq 4$.

In the general graph context, we have shown that the connection time between any pair of vertices can be computed in polynomial time.  A natural extension would be to consider the complexity of computing the number of moves required to connect a set of $k$ vertices.
\begin{prob}
Given a graph $G$ and a subset $U \subset V(G)$ of (fixed) size $k$, what is the complexity of determining the minimum number of moves required to create a monochromatic component containing all $u \in U$?
\end{prob}

Very few results are known about which classes of graphs allow a polynomial time algorithm to solve FIXED-FLOOD-IT, c-FIXED-FLOOD-IT, or the
free variants of these problems. However, we make one conjecture.
\begin{conj}
$c$-FREE-FLOOD-IT is polynomially solvable on subdivisions of any fixed graph $H$.
\label{subdivisions}
\end{conj}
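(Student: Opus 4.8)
The plan is to exploit the bounded structure of a subdivision: if $G$ is a subdivision of a fixed graph $H$, then $G$ has at most $|V(H)|$ \emph{branch vertices} (those of degree at least three), and deleting these leaves a disjoint union of at most $|E(H)|$ paths, namely the subdivided edges of $H$. Thus $G$ is a bounded number of long paths glued together at a bounded number of junctions, and the path machinery of Section \ref{general} --- in particular the monotonicity Lemma \ref{monotonicity}, the concatenation Lemma \ref{cat-paths}, and the connection-time algorithm of Theorem \ref{connection-time} --- should apply to each segment individually.

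First I would set up a dynamic program whose state records the \emph{configuration at the branch vertices}: the current colour of the monochromatic component meeting each branch vertex, together with the partition of the branch vertices into monochromatic components. Since both $c$ and the number of branch vertices $|B| \leq |V(H)|$ are fixed, the number of such configurations is bounded by $c^{|B|}$ times the (bounded) number of partitions of $B$, hence $O(1)$. The rough idea is that an optimal flooding sequence can be broken at the moves that alter this configuration (call these \emph{branch events}), and that the blocks of moves between consecutive branch events act more or less independently on the individual segments.

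Next I would argue that, given the branch configuration immediately before and after such a block, the cost of the block decomposes as a sum over segments of independent path subproblems: flooding a segment to a prescribed colour, or linking its two endpoints through a monochromatic path of a given colour, in each case with the colours (and linkage) of the two end regions fixed by the surrounding configuration. Each such subproblem is a flooding or connection computation on a single path, solvable by Corollary \ref{polypath} and Theorem \ref{connection-time} once the boundary conditions supplied by the DP are imposed, and concatenating path solutions across a branch vertex while absorbing any overshoot is controlled by Lemmas \ref{cat-paths} and \ref{monotonicity}. Combining the bounded enumeration of branch configurations with these polynomially-computable segment costs would then yield a polynomial-time algorithm.

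The main obstacle is justifying that an optimal strategy really does decouple across the branch vertices in this way, and in particular bounding the number of branch events. Merging events are easy to bound --- there are at most $|B|-1$ of them --- but a single move at a branch vertex can simultaneously flood across several segments, and in principle a branch component could be recoloured many times while the segments are absorbed piecemeal, so the blocks are genuinely coupled. The crux will therefore be an exchange argument, generalising the monotonicity of Lemma \ref{monotonicity} from a single path to the neighbourhood of a branch vertex, showing that one may reorder and normalise an optimal sequence so that it touches each branch vertex only a bounded number of times and so that the per-segment subproblems become independent. Proving such a normal-form lemma --- essentially a structural characterisation of optimal play near a bounded set of high-degree vertices --- is the key difficulty, and is presumably why the statement is left as a conjecture rather than a theorem.
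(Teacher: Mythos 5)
This statement is a \emph{conjecture} in the paper: the authors offer no proof at all (and indeed remark that even special cases such as trees with a bounded number of branch vertices would be new), so your attempt must stand on its own as a complete argument, and it does not --- by your own admission in the final paragraph. The concrete gap is that your dynamic program is not well-founded. The proposed state (the colour of the monochromatic component at each branch vertex, plus the partition of branch vertices into components) does not determine the cost to finish the game: the residual colouring of each subdivided edge, and how far each branch component has already penetrated into its incident segments, are essential to the future cost, and there are unboundedly many such internal positions per branch configuration. Two game positions with identical branch configurations can have completion costs differing by $\Theta(n)$, so ``the cost of a block between two configurations'' is not a well-defined transition weight, and the claimed decomposition into independent per-segment path problems is precisely the assertion that needs proof, not a consequence of the setup.

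Moreover, the tools you invoke point in the wrong direction. Lemmas \ref{monotonicity} and \ref{cat-paths} give \emph{upper} bounds (the decomposed strategy is achievable); what the conjecture requires is the matching \emph{lower} bound, that no optimal sequence exploiting simultaneous absorption across a branch vertex can beat the decomposed cost. The paper's only lower-bound machinery of this type is Lemma \ref{min-path} (via critical paths, Lemma \ref{general-nice-path}), and it handles linking a \emph{single pair} of vertices; there is no analogue for flooding a whole graph, nor even for linking three or more vertices. Note that the paper poses exactly this --- the complexity of linking a set of $k$ vertices --- as an open problem immediately before the conjecture, and your scheme would need to solve at least that special case (linking the branch vertices) as a subroutine. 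So what you have is a sensible research plan whose central lemma --- the exchange/normal-form argument decoupling optimal play across branch vertices --- is the entire content of the conjecture, and it remains unproved.
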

Note that Conjecture \ref{subdivisions} would imply that $c$-FREE-FLOOD-IT is solvable in polynomial time on cycles and on trees with only a bounded number of vertices of degree at least three.  The conjecture may in fact hold even if we allow a colour-set of unbounded size.


\begin{thebibliography}{9}

\bibitem{arthurFUN}
   David Arthur, Rapha{\"e}l Clifford, Markus Jalsenius, Ashley Montanaro, and Benjamin Sach,
   The Complexity of Flood Filling Games,
   in Paolo Boldi and Luisa Gargano, editors,
   \emph{FUN}, volume 6099 of \emph{Lecture Notes in Computer Science},
   Springer,
   ISBN 978-3-642-13121-9,
   2010 pages 307-318.
   
\bibitem{clifford}
   Rapha{\"e}l Clifford, Markus Jalsenius, Ashley Montanaro, and Benjamin Sach,
   The Complexity of Flood Filling Games,
   arXiv.1001.4420v2 [cs.DS],
   August 2010.

     
\bibitem{clr90}
   Thomas H.~Cormen, Charles E.~Leiserson and Ronald L.~Rivest,
   \emph{Introduction to Algorithms},
   MIT Press and McGraw-Hill,
   1990.   

\bibitem{fleischer10}
   Rudolf Fleischer and Gerard J.~Woeginger,
   An Algorithmic Analysis of the Honey-Bee Game,
   in Paolo Boldi and Luisa Gargano, editors,
   \emph{FUN}, volume 6099 of \emph{Lecture Notes in Computer Science},
   Springer,
   ISBN 978-3-642-13121-9,
   2010 pages 178-189.      

\bibitem{lagoutte}
   Aur\'{e}lie Lagoutte,
   Jeux d'inondation dans les graphes,
   Technical report,
   ENS Lyon,
   HAL: hal-00509488,
   August 2010.

\bibitem{2xn}
   Kitty Meeks and Alexander Scott, 
   The complexity of Free-Flood-It on $2 \times n$ boards, 
   arxiv.1101.5518v1 [cs.DS],
   January 2011.


\bibitem{raiha81}
   Kari-Jouko {R\"{a}ih\"{a}} and Esko Ukkonen,
   The shortest common supersequence problem over binary alaphabet is NP-complete,
   \emph{Theoretical Computer Science},
   \textbf{16} (1981),
   187-198.



\end{thebibliography}
\end{document}